\def\NAT@spacechar{~}%
\definecolor{mygrey}{rgb}{0.9,0.9,0.9}
\definecolor{darkgreen}{RGB}{0,100,0}
\theoremstyle{plain}
\newtheorem{theorem}{Theorem}
\newtheorem{corollary}{Corollary}
\newtheorem{lemma}{Lemma}
\newtheorem{proposition}{Proposition}
\theoremstyle{remark}
\theoremstyle{definition}
\newtheorem{definition}{Definition}
\newtheorem{constr}{Construction}
\crefname{constr}{Construction}{Constructions}
\crefname{step}{Step}{Steps}
\crefname{observation}{Observation}{Observations}
\crefname{proposition}{Proposition}{Propositions}
\crefname{theorem}{Theorem}{Theorems}
\Crefname{theorem}{Thm.}{Thms.}
\crefname{remark}{Remark}{Remarks}
\crefname{prop}{Property}{Properties}
\newcommand{\decprob}[3]{
	\begin{center}
		\begin{minipage}{0.995\textwidth}
			\noindent
			\textsc{#1}\\
			\setlength{\tabcolsep}{3pt}
			\begin{tabularx}{\textwidth}{@{}lX@{}}
					\normalsize \textbf{Input:} 		& \normalsize #2 \\
					\normalsize \textbf{Question:} 	& \normalsize #3
				\end{tabularx}
		\end{minipage}
	\end{center}
}
\DeclareMathOperator{\dist}{dist}
\newcommand{\no}{\textsc{no}\xspace}
\newcommand{\yes}{\textsc{yes}\xspace}
\newcommand{\N}{\mathds{N}}
\newcommand{\Z}{\mathds{Z}}
\DeclareMathOperator{\col}{col}
\newcommand{\floor}[1]{\mathop{\lfloor #1 \rfloor}}
\newcommand{\si}{Connected Subgraph Isomorphism\xspace}
\newcommand{\siAcr}{\textsc{CSI}\xspace}
\newcommand{\siTsc}{\textsc{\si}\xspace}
\newcommand{\hsi}{$H$-Subgraph Isomorphism\xspace}
\newcommand{\hsiAcr}{\textsc{$H$-SI}\xspace}
\newcommand{\hsiTsc}{\textsc{\hsi}\xspace}
\newcommand{\calK}{\mathcal{K}}
\newcommand{\calD}{\mathcal{D}}
\newcommand{\qedconstr}{\hfill$\diamond$}
\newcommand{\eps}{\ensuremath{\varepsilon}}
\newcommand{\thetitle}{Kernelization Lower Bounds for \\ Finding Constant-Size Subgraphs}%
\title{\thetitle{}}
\date{\vspace{-20pt}}
\author[1]{Till~Fluschnik\thanks{Supported by the DFG, projects DAMM (NI~369/13-2) and TORE (NI~369/18).}$ ^{,}$}
\author[2]{George~B.~Mertzios\thanks{Partially supported by the EPSRC grant EP/P020372/1.}$ ^{,}$}
\author[1]{Andr\'{e}~Nichterlein\thanks{Supported by a postdoc fellowship of DAAD while at Durham University.}$ ^{,}$}
\affil[1]{\small{Algorithmics and Computational Complexity, Faculty IV, TU~Berlin, Germany, \texttt{\{till.fluschnik,andre.nichterlein\}@tu-berlin.de}}}
\affil[2]{\small{Department of Computer Science, UK, \texttt{george.mertzios@durham.ac.uk}}}
\begin{document}

\maketitle
\begin{abstract}
  Kernelization is an important tool in parameterized algorithmics.
  Given an input instance accompanied by a parameter, the goal is to compute in polynomial time an equivalent instance of the same problem such that the size of the reduced instance only depends on the parameter and not on the size of the original instance.
  In this paper, we provide a first conceptual study on limits of kernelization for several \emph{polynomial-time} solvable problems.
  For instance, we consider the problem of finding a triangle with negative sum of edge weights parameterized by the maximum degree of the input graph.
  We prove that a linear-time computable strict kernel of truly subcubic size for this problem violates the popular APSP-conjecture.
\end{abstract}

\section{Introduction}\label{sec:intro}

Kernelization is the main mathematical concept for provably efficient preprocessing of computationally hard problems. 
This concept has been extensively studied (see, e.g.,~\cite{FominS14,GN07,Kra14,LokshtanovMS12}) and it has great potential for delivering practically relevant algorithms~\cite{Iwata17,Wei98}.
In a nutshell, the aim is to significantly and efficiently reduce a given instance of a parameterized problem to its ``computationally hard core''. 
Formally, given an instance~$(x,k)\in \{0,1\}^{\ast}\times \N$ of a parameterized problem $L$, a \emph{kernelization} for $L$ is an algorithm that computes in polynomial time an instance~$(x',k')$, called kernel, such that (i)~$(x,k)\in L \iff (x',k')\in L$ and (ii)~$|x'|+k' \leq f(k)$, for some computable function $f$. 
Although studied mostly for NP-hard problems, it is natural to apply this concept also to polynomial-time solvable problems as done e.g.~for finding maximum matchings~\cite{MNN17}.
It is thus also important to know the \emph{limits} of this concept. 
In this paper we initiate a systematic approach to derive \emph{kernelization lower bounds} for problems in P.
We demonstrate our techniques at the example of subgraph isomorphism problems where the sought induced subgraph has constant size and is connected.

When kernelization is studied on NP-hard problems (where polynomial running times are considered computationally ``tractable''), the main point of interest becomes the \emph{size}~$f(k)$ of the kernel with respect to the parameter $k$. 
In particular, from a theoretical point of view, one typically wishes to minimize the kernel size to an---ideally---polynomial function $f$ of small degree.  
As every decision problem in P admits a kernelization which simply solves the input instance and produces a kernel of size~$O(1)$ (encoding the \yes/\no answer), it is crucial to investigate the \emph{trade-off} between (i)~the size of the kernel and (ii)~the running time of the kernelization algorithm. 
The following notion captures this trade-off:
An \emph{$(a,b)$-kernelization} for a parameterized problem~$L$ is an algorithm that, given any instance~$(x,k)\in\{0,1\}^{\ast}\times \N$, computes in $O(a(|x|))$~time an instance~$(x',k')$ such that
	\begin{inparaenum}[(i)]
		\item $(x,k)\in L$ $\iff$ $(x',k')\in L$ and
		\item $|x'|+k'\in O(b(k))$.
	\end{inparaenum}

Kernelization for problems in~P is part of the recently introduced framework ``FPT in P''~\cite{GMN17-TCS}.
This framework is recently applied to investigate parameterized algorithms and complexity for problems in P~\cite{AWW16,FKMNNT17,FominLPSW_Matching-Treewidth_SODA17,GMN17-TCS,MNN17}.  
Studying lower bounds for kernelization for problems in~P is---as it turns out---strongly connected to the active research field of \emph{lower bounds} on the running times of polynomial-time solvable problems~(see, e.g,~\cite{AGW15,AVW14,AWW16,Bringmann14}). 
These running time lower bounds rely on popular conjectures like the Strong Exponential Time Hypothesis (SETH)~\cite{IP01,IPZ01} or the \textsc{3SUM}-conjecture~\cite{GajentaanO95}, for instance.

In contrast to NP-hard problems, only little is known about kernelization lower bounds for problems in P.
To the best of our knowledge all known kernelization lower bounds follow trivially from the corresponding lower bounds of the running time: 
For instance, assuming SETH, it is known that (i) the hyperbolicity and (ii) the diameter of a graph cannot be computed in $2^{o(k)}\cdot n^{2-\eps}$ time for any $\eps>0$, where $k$ is (i) the vertex cover number and (ii) the treewidth of the graph~\cite{FKMNNT17,AWW16}. 
This implies that both problems do not admit an~$(n^{2-\eps},2^{o(k)})$-kernelization---a kernel with $2^{o(k)}$ vertices computable in~$O(n^{2-\eps})$ time---since such a kernelization yields an algorithm running in~$O(2^{o(k)} + n^{2-\eps})$~time.

In this paper we initiate a systematic approach to derive kernelization lower bounds for problems in P for a---very natural---special type  of kernels.

\begin{definition}[strict $(a,b)$-kernelization]
 \label{def:strict-kernel}
	A \emph{strict $(a,b)$-kernelization} for a parameterized problem~$L$ is an algorithm that given any instance~$(x,k)\in\{0,1\}^{\ast}\times \N$ computes in~$O(a(|x|))$ time an instance $(x',k')$ such that
	\begin{inparaenum}[(i)]
		\item $(x,k)\in L$ $\iff$ $(x',k')\in L$,
		\item $|x'|+k'\in O(b(k))$, and 
		\item $k'\leq k$.
	\end{inparaenum}
\end{definition}

Chen et al.~\cite{CFM11} %
introduced a framework to exclude strict kernels for NP-hard problems, assuming that P${}\neq{}$NP. 
Fernau et al.~\cite{FFHKMN16} applied the framework to a wide variety of FPT problems and studied it on ``less'' strict kernelizations. 
The framework~\cite{CFM11,FFHKMN16} is based on the notion of \emph{(strong) diminishers}:

\begin{definition}[$a$-diminisher]
\label{def:diminisher}
	An \emph{$a$-diminisher} for a parameterized problem~$L$ is an algorithm that given any instance~$(x,k)\in\{0,1\}^{\ast}\times \N$ in $O(a(|x|))$~time either decides whether $(x,k)\in L$ or computes an instance~$(x',k')$ such that
	\begin{inparaenum}[(i)]
		\item $(x,k)\in L$ $\iff$ $(x',k')\in L$, and
		\item $k'< k$.
	\end{inparaenum}
	A \emph{strong $a$-diminisher} for $L$ is an $a$-diminisher for~$L$ with $k'< k / c$ for some constant~$c > 1$.
\end{definition}
\paragraph{Our Contributions.}

We adapt the diminisher framework~\cite{CFM11,FFHKMN16} to prove kernelization lower bounds for problems in P. 
Our results concern the \textsc{$H$-Subgraph Isomorphism} \emph{($H$-SI)} problem\footnote{The \textsc{$H$-Subgraph Isomorphism} asks, given an undirected graph~$G=(V,E)$, whether $G$ contains~$H$ as a subgraph.} for \emph{constant-sized connected} graphs~$H$. 
As a running example, we focus on the fundamental case where~$H$ is a triangle and we present diminishers (along with kernelization lower bounds) for the following weighted and colored variants of the problem:

\decprob{Negative Weight Triangle (NWT)}
{An undirected graph~$G$ with edge weights~$w\colon E(G)\to\Z$.}
{Is there a triangle~$T$ in~$G$ with $\sum_{e\in E(T)} w(e)<0$?}

\decprob{Triangle Collection (TC)}
{An undirected graph~$G$ with surjective coloring~$\col:V(G)\to[f]$.}
{Does there for all color-triples $C\in\binom{[f]}{3}$ exist a triangle with vertex set~$T=\{x, y, z\}$ in $G$ 
such that~$\col(T)=C$?}

\newcommand{\smtab}[1]{\scriptsize#1}
\renewcommand{\arraystretch}{1}
\begin{table*}[t]
  \setlength{\tabcolsep}{10pt}
  \centering
  \caption
  {
    Overview of our results. Here, $k$ is interchangeably the order of the largest connected component, the degeneracy, or the maximum degree.
  }%
  \begin{tabularx}{0.98\textwidth}{@{}cXX}  \toprule
					& \textsc{Negative Weight Triangle} (NWT) 	& \textsc{Triangle Collection (TC)}  \\\midrule
		lower 
					& \multicolumn{2}{c}{No strict $(n^{\alpha},k^{\beta})$-kernelization with~$\alpha,\beta\geq1$ and $\alpha \cdot \beta < 3$, assuming:}  \vspace{0.0cm}\\
		bounds			& \multicolumn{1}{c}{\multirow{2}{*}{the APSP-conjecture.}}				& \multicolumn{1}{|c}{the SETH, APSP-, or}\\
		(\Cref{thm:mainklb})	& 						&  \multicolumn{1}{|c}{3SUM-conjecture.}\\\midrule
		kernel			& \multicolumn{2}{c}{Strict $(n^{(3+\eps)/(1+\eps)},k^{1+\eps})$-kernelization for every~$\eps>0$,}  \vspace{0.0cm} \\
		(\Cref{thm:pl-fpt-in-p-implies-kernel})					& \multicolumn{2}{c}{e.g.\ strict $(n^{5/3},k^{3})$-kernelization.}  \vspace{0.0cm}\\
   \bottomrule	
  \end{tabularx}
  \label{tab:results}
\end{table*}

NWT and TC are conditionally hard:
If NWT admits a \emph{truly} subcubic algorithm---that is, with running time $O(n^{3-\eps})$, $\eps>0$---then APSP also admits a truly subcubic algorithm, breaking the APSP-conjecture~\cite{WW10}.
A truly subcubic algorithm for TC breaks the SETH, the 3SUM-, and the APSP-conjecture~\cite{AWY15}. 

For both NWT and TC we consider three parameters (in decreasing order): 
(i)~order (that is, the number of vertices) of the largest connected component, (ii)~maximum degree, and (iii)~degeneracy. 
We prove that both NWT and TC admit a strong linear-time diminisher for all these three parameters.
Together with the conditional hardness, we then obtain lower bounds on strict kernelization.
Our results are summarized in~\cref{tab:results}.

Complementing our lower bounds, we prove a strict $(n^{5/3},k^{3})$-kernelization for NWT and TC ($k$ being any of the three aforementioned parameters) and a strict~$(n\cdot\Delta^{\floor{c/2}+1},\Delta^{\floor{c/2}+1})$-Turing kernelization for \textsc{$H$-Subgraph Isomorphism} when parameterized by the maximum degree~$\Delta$, where~$c=|V(H)|$.

\paragraph{Notation and Preliminaries.}

We use standard notation from parameterized complexity
\cite{CFK+15}
and graph theory
\cite{Die10}.
For an integer~$j$, we define~$[j]:=\{1,\ldots,j\}$.

\section{Frameworks to Exclude Polynomial Kernelizations}%
\label{sec:frameworks}
We briefly recall the existing frameworks to exclude (strict) polynomial-size kernels for NP-hard problems.
We further discuss the difficulties that appear when transferring these approaches to polynomial-time solvable problems.

\subsection{Composition Framework}

The frequently used (cross-)composition frameworks~\cite{BDFH09jcss,FS11,BJK14} are the tools to exclude polynomial-size problem kernels under the assumption NP${}\subseteq{}$coNP/poly.
We describe the intuitive idea behind these frameworks on the example of \siTsc (\siAcr): Given two undirected graphs~$G=(V,E)$ and~$H=(W,F)$ where~$H$ is connected, decide whether~$G$ contains~$H$ as a subgraph?
We consider \siAcr{} parameterized by the order~$k$ of the largest connected component of the input graph.

Assume \siAcr has a kernel of size~$O(k^c)$ for some constant~$c$.
Let~$(G_1, H), (G_2, H), \ldots, (G_\ell, H)$ be several instances of \siAcr with the same connected graph~$H$.
Clearly, the graph~$G$ obtained by taking the disjoint union of all~$G_i$ contains~$H$ if and only if some~$G_i$ contains~$H$.
Furthermore, the parameter of~$G$ is~$\max_{i \in [\ell]}\{|V(G_i)|\}$.
By choosing~$\ell = k^{c+1}$, it follows that kernelizing the instance~$(G,H)$ yields an instance of size less than~$\ell$, that is, less bits than instances encoded in~$G$.
Intuitively, this means that the kernelization algorithm had to solve at least one of the instances~$(G_i,H)$ in polynomial time.
Since \siAcr is NP-complete, this is believed to be unlikely.

The composition framework formalizes this intuitive approach.
If one uses the original proof strategy based on a result of Fortnow and Santhanam~\cite[Theorem 3.1]{FS11}, then one arrives at the following intermediate statement:
``If \siAcr parameterized by the order~$k$ of the largest connected component admits an $O(n^c)$-time computable $O(k^{c'})$-size kernel, then $\overline{\text{\siAcr}}{}\in{}$NTIME$(n^{c'(c+1)})/n^{c+1}$.'' 
(Here, $\overline{\text{\siAcr}}$ denotes the complement of \siAcr.)
This means that e.\,g.\ a linear-time linear-size kernel would imply a nondeterministic quadratic-time algorithm using~$n^2$ advice to detect no-instances of \siAcr.

The next step in the proof strategy is to exploit the NP-completeness of \siAcr.
Thus, one can reduce any problem in coNP in polynomial time to $\overline{\text{\siAcr}}$. 
Furthermore, since~$c,c' \in O(1)$ one can deduce from the statement $\overline{\text{\siAcr}}{}\in{}$NTIME$(n^{c'(c+1)})/n^{c+1}$, that~NP${}\subseteq{}$coNP/poly, which in turn implies that the polynomial hierarchy collapses~\cite{Yap83}.
Thus, such a kernel is unlikely. 

There are some issues when adapting these frameworks for problems in P.
We discuss the issues using the \hsiTsc~(\hsiAcr) problem for constant-sized connected~$H$.
Adapting the proofs of~Bodlaender et al.~\cite{BDFH09jcss} and Fortnow and Santhanam~\cite{FS11} for \hsiAcr leads to the following:
``If \hsiAcr parameterized by the order~$k$ of the largest connected component admits an $(n^c,k^{c'})$-kernelization, then $\overline{H\text{-SI}}{}\in{}$NTIME$(n^{c'(c+1)})/n^{c+1}$.''
Since there exists a trivial $O(n^{|H|+1})$-time brute-force algorithm for \textsc{\hsiAcr}, there also exist trivial polynomial-time computable kernels for \textsc{\hsiAcr}.
Hence, we have to stick with specifically chosen~$c$ and~$c'$ (with $c \cdot c' < |H|$).
Furthermore, we cannot transfer these results easily to other problems in P due to the lack of a suitable completeness theory (\textsc{\hsiAcr} belongs to~P).

One drawback of the composition approach for any problem~$L$ in P is the lack of clarity on the assumption's ($\overline{L} \notin{}$NTIME$(n^{c'(c+1)})/n^{c+1}$) reasonability.
Moreover, due to a missing equivalent to the NP-completeness theory, the assumption bases on specific problems and not on complexity classes.

\subsection{Strict Kernelization and Diminishers}

Chen et al.~\cite{CFM11} introduced a framework to exclude \emph{strict} kernelization, that is, kernelization that do not allow an increase in the value of the parameter in the obtained kernel instance.
This framework builds on the assumption P${}\neq{}$NP and can be easily adapted to exclude strict kernels for polynomial-time solvable problems.
Recall that for problems in P, 
both the size of the kernel \emph{and} the kernelization running time are important.

We use the problem NWT, parameterized by the order~$k$ of the largest component, as a running example.
Recall that the unparameterized version of this problem is as hard as APSP~\cite{WW10}.
Now the question is whether there is a strict $(n+m, k)$-kernelization for NWT parameterized by the size~$k$ of the largest component.
Given an input~$(G=(V,E),k)$ of NWT such a strict kernelization produces in~$O(n+m)$~time an equivalent instance~$(G',k')$ with~$|G'|+k'\in O(k)$ and~$k'\leq k$.
We will prove that such a strict kernelization would yield a truly subcubic algorithm for APSP.
Our argument relies on the key concept of an \emph{$a$-diminisher} (see \cref{def:diminisher}).

In~\cref{sec:lower-bounds}, we provide a strong $(n+m)$-diminisher for NWT($k$).
Now assume that there is a strict $(n+m, k)$-kernelization for NWT($k$).
The basic idea of the whole approach is to \emph{alternately} apply the diminisher and the kernel. %
Intuitively, one application of the diminisher will halve the size of the connected components at the cost of increasing the size of the instance. 
In turn, the strict kernel bounds the size of the instance in~$O(k)$ without increasing~$k$.
Thus, after $\log(k)$ rounds of applying a strong diminisher and a strict kernel we arrive at an instance~$I$ with constant size connected components.
Then, we can use even a simple brute-force algorithm to solve each connected component in~$O(1)$ time which gives an~$O(n+m)$ time algorithm to solve the instance~$I$.
Altogether, with~$\log k \le \log n$ rounds, each requiring~$O(n+m)$ time, we arrive at an $O((n+m)\log n)$-time algorithm for NWT.
This implies a truly subcubic algorithm for APSP, thus contradicting the APSP-conjecture~\cite{WW10}. 
Formalizing this idea of interleaving diminisher and strict kernel yields the following.

\begin{theorem}
	\label{thm:kernel+dimin=alg}
	Let $L$ be a parameterized problem with parameter $k$ such that each instance with parameter $k\leq c$ for some constant $c>0$ is a trivial instance of~$L$.
	If $L$ with parameter $k$ admits a strict $(a,b)$-kernelization and an $a'$-diminisher (a strong $a'$-diminisher), then any instance $(x,k)$ is solvable in $O(k\cdot (a(a'(b(k)))+a(|x|))$~time (in $O(\log k \cdot (a(a'(b(k)))+a(|x|))$~time).
\end{theorem}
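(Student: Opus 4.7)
The plan is to iterate the strict kernelization and the (possibly strong) diminisher in alternation, using the former to prevent size blow-up and the latter to force the parameter downward. I would first apply the strict kernelization once to $(x,k)$, obtaining in $O(a(|x|))$ time an equivalent instance $(x_0,k_0)$ with $|x_0|+k_0 \in O(b(k))$ and $k_0 \le k$. From this point on, every instance in the iteration has size $O(b(k))$, which is the fact that will make the running time of each subsequent round independent of $|x|$.

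In round $i \ge 1$, I apply the $a'$-diminisher to $(x_{i-1},k_{i-1})$. Since $|x_{i-1}| \in O(b(k))$, this costs $O(a'(b(k)))$ time and either decides the instance outright, in which case we are done, or yields an equivalent instance $(x_i',k_i')$ with $k_i' < k_{i-1}$; because an algorithm running in time $t$ cannot output more than $O(t)$ symbols, $|x_i'| \in O(a'(b(k)))$. I then apply the strict kernelization to $(x_i',k_i')$, at cost $O(a(a'(b(k))))$, producing $(x_i,k_i)$ with $|x_i|+k_i \in O(b(k_i')) \subseteq O(b(k))$ and, crucially, $k_i \le k_i' < k_{i-1}$. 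Strictness is essential here: without it the parameter could drift upwards and cancel the diminisher's gain, so that the very next kernel call would no longer be bounded by $b(k)$.

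Since the parameter strictly decreases in every round (by at least $1$ for a plain diminisher, and by at least a factor $c>1$ for a strong one), after at most $k$ rounds (respectively $O(\log k)$ rounds) the parameter drops to the trivial threshold, at which point the instance is solved in $O(1)$ time by assumption. Summing the costs yields $O(a(|x|))$ for the initial kernelization plus $k$ (resp.~$\log k$) rounds of cost $O(a(a'(b(k))))$ each, matching the claimed running times. I expect the main subtlety to lie in the careful verification that the instance size really does remain $O(b(k))$ throughout the iteration; this hinges on the two simultaneous facts that the diminisher's output is bounded by its own running time and that the strict kernel never increases the parameter, so the bound $b(k)$ is preserved from one round to the next without any accumulating slack.
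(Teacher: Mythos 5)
Your proposal is correct and follows essentially the same strategy as the paper's proof: apply the strict kernel once up front, then alternate diminisher and kernel, using $k' \le k$ to keep the instance size at $O(b(k))$ and the fact that the diminisher's output size is bounded by its running time to bound each kernelization round by $O(a(a'(b(k))))$. The termination argument ($k$ rounds, or $O(\log k)$ for a strong diminisher) and the final constant-time decision at $k \le c$ also match the paper.
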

\begin{proof}
 Let $(x,k)$ be an instance of $L$ with parameter $k$.
 Let $\calK$ be a strict $(a,b)$-kernelization and $\calD$ be a $a'$-diminisher.
 Apply $\calK$ on $(x,k)$ to obtain an instance $(x',k')$ with $|x'|+k'\leq b(k)$ and $k'\leq k$.
 This step requires $O(a(|x|))$ time.
 Next, until $k'\leq c$, apply $\calK\circ\calD$ iteratively.
 Each iteration requires at most $O(a'(b(k)))$ time for the $a'$-diminisher, and since the size of the resulting instance is upper-bounded by $O(a'(b(k)))$, 
the subsequent kernelization requires $O(a(a'(b(k))))$ time.
 Since in each iteration, the value of $k'$ decreases by one, there are at most $k$ iterations.
 (If~$\calD$ is a strong $a'$-diminisher, then the number of rounds is~$\log_c k = O(\log k)$.)
 Finally, if $k'\leq c$, the algorithm decides the obtained instance in constant time.
 Hence, the algorithm requires $O(k\cdot a(a'(b(k)))+a(|x|))$ ($O(\log k\cdot a(a'(b(k)))+a(|x|))$) time to decide $(x,k)$.
\end{proof}

It is important to note here a subtle difference between strict kernels and the ``classical'' kernels 
(i.e.~where the obtained parameter is allowed to be upper-bounded by any function in the parameter of the input instance). 
In the context of classical kernels we can draw kernelization upper and lower bound conclusions 
by classifying the various parameters in a (partial) hierarchy, 
according to which parameter is (asymptotically) smaller or larger than the other. 
That is, if there exists a polynomial-sized kernel for a ``small'' parameter $k$, 
then there trivially also exists a polynomial-sized kernel for a ``large'' parameter $k'$. 
Similarly, if a problem does not admit a polynomial-sized kernel for $k'$ 
(assuming that NP${}\nsubseteq{}$coNP/poly), 
then this problem also does not admit a polynomial-sized kernel for $k$. 
However, such a hierarchy of the parameters does not imply---in principle---anything about the 
existence or non-existence of certain strict kernels. 

Indeed, consider two parameters $k$ and $k'$ for a problem $L$, where $k' > k$. 
Assume that $L$ admits a diminisher for parameter parameter $k'$; 
that is, $L(k')$ excludes a certain strict kernel (assuming some complexity-theoretic conjecture such as, 
for example, APSP). 
Then, the value of the parameter $k'$ in the instance produced by this diminisher is 
strictly smaller than the value of~$k'$ in the input instance (see \cref{def:diminisher}). 
However, as the \emph{size} of the instance produced by the diminisher typically increases, 
it might be the case that the value of $k$ in this new instance is \emph{larger} than the value of $k$ 
in the input instance. In such a case, the existence of a diminisher for the large parameter $k'$ 
does not immediately imply a diminisher for the small parameter $k$, and thus a strict kernel for $k$ might---in principle---exist, although no strict kernel exists for $k'$.
%
%

%
\subsection{Reductions for Transferring Kernels}

For NP-complete problems, it is easy to transfer polynomial kernelization results using the following type of reductions~\cite{BTY11}:
Given two parameterized problems~$L,L' \subseteq \Sigma^* \times \N$, a \emph{polynomial parameter transformation} from~$L$ to~$L'$ is a polynomial-time computable mapping~$f\colon\Sigma^* \times \N \rightarrow \Sigma^* \times \N$ that maps every instance~$(x,k)$ to an instance~$(x',k')$ such that 
\begin{inparaenum}[(i)]
	\item $(x,k) \in L \iff (x',k') \in L'$ and 
	\item $k' \le k^{O(1)}$.
\end{inparaenum}

To see that this is the ``correct'' notion of reduction consider the case that~$L'$ admits a polynomial kernel and its unparameterized version~$\bar{L'}$ admits a polynomial-time reduction to~the unparameterized version~$\bar{L}$ of~$L$.
If there is a polynomial parameter transformation from~$L$ to $L'$, then~$L$ also has a polynomial kernel:
Let~$(x,k)$ be the instance of~$L$. 
Then, using the polynomial parameter transformation from~$L$ to~$L'$ we compute in polynomial time the instance~$(x_1,k_1)$ for~$L'$ with~$k_1\leq k^{O(1)}$.
Next, we use the polynomial kernelization for~$L'$ to obtain the kernel~$(x_2,k_2)$ for~$L'$ such that~$|x_2| \le k_2^{O(1)}$ and~$k_2 \le k_1^{O(1)}\leq k^{O(1)}$.
The transformation yields~$|x''| \le k^{O(1)}$.
By assumption, there is a polynomial-time reduction from~$\bar{L'}$ to~$\bar{L}$ which we can use to transfer the kernel to~$L$, as the obtained instance for~$L$ is still of size~$k^{O(1)}$ and thus is desired polynomial kernel.
Consequently, if~$L$ does not admit a polynomial kernel (e.g.~under some complexity theoretical assumption), then~$L'$ does not as well.

There are two issues when using the strategy of \emph{polynomial parameter transformations} to transfer results of \cref{thm:kernel+dimin=alg} along polynomial-time solvable problems:
First, we need to require the transformation to be computable ``fast'' enough and that the parameter does not increase~($k' \le k$).
Second, in order to transfer a strict kernel we need to show a reverse transformation from~$L'$ to~$L$ which again is computable ``quick'' enough and does not increase the parameter.
Hence, we essentially need to show that the two problems~$L$ and~$L'$ are \emph{equivalent} under these restrictive transformations.

\section{Kernelization Lower Bounds via Diminishers} 
  \label{sec:lower-bounds}

In this section, we present diminishers for \hsiTsc (\hsiAcr) for connected~$H$ with respect to the structural parameters 
\begin{inparaenum}[(i)]
	\item order~$\ell$ of the largest connected component,
	\item maximum degree~$\Delta$, and
	\item degeneracy~$d$.
\end{inparaenum}
Observe that~$d \le \Delta \le \ell$ in every graph. 
These lead to our following main result.

\begin{theorem}
  \label{thm:mainklb}
  If NWT (TC) parameterized by~$k$ being the 
  \begin{inparaenum}[(i)]
	\item order~$\ell$ of the largest connected component,
	\item maximum degree~$\Delta$, or
	\item degeneracy~$d$
  \end{inparaenum}
  admits a strict $(n^\alpha, k^\beta)$-kernel for constants~$\alpha, \beta \ge 1$ with~$\alpha \cdot \beta < 3$, then the APSP-conjecture (the SETH, the 3SUM-, and the APSP-conjecture) breaks.
\end{theorem}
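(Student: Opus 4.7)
The plan is to invoke \cref{thm:kernel+dimin=alg} with a strong linear-time diminisher that is developed later in this section. Concretely, for each of the two problems NWT and TC, and each of the three parameters (component order~$\ell$, maximum degree~$\Delta$, degeneracy~$d$), I will construct a \emph{strong} $a'$-diminisher with $a'(n)=O(n+m)$. For the component-order case the construction partitions each large component into a constant number $c>1$ of roughly equal-sized pieces and produces one ``case-instance'' for every combination of pieces in which a triangle could place its three vertices; intra-piece cases are handled by simply restricting to one piece, whereas mixed cases are encoded by small auxiliary gadgets whose new components are by a factor $c$ smaller than the original. For the parameters $\Delta$ and $d$ the same idea is applied locally: the neighborhood of each high-degree vertex (respectively, the prefix of a degeneracy ordering) is split into constantly many pieces, and the gadgets are designed so that $\Delta$ (respectively $d$) in the output is also reduced by a constant factor.

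With these diminishers in place, the theorem follows from a direct time computation. Applying \cref{thm:kernel+dimin=alg} with $a(n)=n^\alpha$, $b(k)=k^\beta$, and a strong linear-time diminisher $a'(n)=O(n)$, any instance of NWT (or TC) is solvable in time
$$O\bigl(\log k \cdot (a(a'(b(k)))+a(|x|))\bigr) \;=\; O\bigl(\log k \cdot (k^{\alpha\beta}+n^\alpha)\bigr).$$
Since $\alpha,\beta\geq 1$ and $\alpha\beta<3$, we have $\alpha<3$, and since $k\leq n$ for each of the three parameters, $k^{\alpha\beta}\leq n^{\alpha\beta}$. Absorbing the $\log k$ factor into a slightly larger exponent that still lies strictly below $3$, the total running time is $O(n^{3-\eps})$ for some $\eps>0$, i.e.\ truly subcubic. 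For NWT this contradicts the APSP-conjecture via the reduction of~\cite{WW10}; for TC it contradicts each of the SETH, the 3SUM-conjecture, and the APSP-conjecture via~\cite{AWY15}, giving the desired lower bound.

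The main obstacle is the diminisher construction itself, especially for the two tighter parameterizations $\Delta$ and $d$. The natural component-splitting trick is balanced but tends to introduce high-degree vertices in the auxiliary gadgets; to diminish $\Delta$ one has to split neighborhoods rather than components and build gadgets that detect all partition-crossing triangles without re-creating a high-degree vertex. The analogous property for the degeneracy parameter is the most delicate step, as degeneracy is a global ``peeling'' quantity and the cross-partition gadgets must additionally admit a good elimination ordering; this is where I expect the bulk of the technical work to lie. The rest is bookkeeping: verifying equivalence of the produced instances, checking the linear running time of each gadget, and keeping the reduction factor of the parameter a fixed constant greater than one so that the diminisher remains \emph{strong} and the $\log k$ bound in \cref{thm:kernel+dimin=alg} applies.
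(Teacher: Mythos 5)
Your overall plan---build strong linear-time diminishers for the three parameters and feed them into \cref{thm:kernel+dimin=alg}---is exactly the paper's strategy, and your final time calculation (plugging in $a=n^\alpha$, $b=k^\beta$, a linear-time $a'$, using $k\le n$, and absorbing the $\log k$ factor by shaving a little off the exponent $3-\alpha\beta$) matches the paper's argument. The top-level structure and the closing step are fine.

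The gap is in the diminisher construction itself, and it is not merely a matter of detail left to be filled in: the construction you sketch would not work as stated. You propose partitioning a large component into $c$ roughly equal pieces, handling ``intra-piece'' triangles by restricting to one piece, and handling ``mixed'' triangles via small auxiliary gadgets. Two problems. First, the intra-piece/mixed-case split is a false start: once a $c$-vertex witness may straddle all $c$ pieces, you need a case-instance containing some union of $c$ pieces, and if you partition into only $c$ pieces that union is the entire component, so the parameter does not drop at all. The paper's actual trick is to \emph{over-partition}: split into $4c$ parts (for component order) or $4c^2$ edge classes (for $\Delta$ and $d$), and then, for every size-$c$ (resp.\ size-$c^2$) subset of parts, emit the induced (resp.\ edge-restricted) subgraph on their union; a size-$c$ union of parts of size $\lceil \ell/(4c)\rceil$ has order at most $\ell/4+c<\ell/2$, giving a genuine factor-$2$ reduction, while completeness follows because a $c$-vertex/$\le c^2$-edge witness touches at most $c$ parts/$c^2$ classes. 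Second, ``gadgets'' are undefined in your plan and are dangerous here: any hub vertex or super-vertex you add to encode cross-piece interactions can blow the parameter right back up (a hub attached to all pieces has degree $\Theta(\Delta)$, and can also raise degeneracy). The paper avoids gadgets entirely---every case-instance is simply an induced or edge-restricted subgraph, so both the parameter bound and the equivalence are immediate. For $\Delta$ the over-partitioning is realized via an improper edge-coloring with $O(c^2)$ colors in which each vertex sees few edges of each color; for $d$ one splits the forward edges of each vertex in a degeneracy ordering into $4c^2$ classes. Until this is replaced by a concrete construction with a proven parameter drop and no gadgets, the proposal does not yet establish the theorem.
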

\subsection{Parameter Order of the Largest Connected Component}
In the following, we prove a linear-time strong diminisher regarding the parameter order of the largest connected component for problems of finding constant-size subgraphs (with some specific property).
The idea behind our diminisher is depicted as follows: for each connected component, partition the connected component into small parts and then take the union of not too many parts to construct new (connected) components (see~\cref{fig:NWT-diminisher} for an illustration of the idea with~$H$ being a triangle).

\begin{constr}
  \label{constr:constsizedim}
  Let~$H$ be an arbitrary but fixed connected constant-size graph of order~$c>1$.
  Let~$G=(V,E)$ be a graph with the largest connected component being of order~$\ell$.
  First, compute in $O(n+m)$~time the connected components~$G_1, \ldots, G_r$ of~$G$.
  Then, construct a graph~$G'$ as follows.
  
  Let~$G'$ be initially the empty graph.
  If~$\ell\leq 4c$, then set~$G'=G$.
  Otherwise, if~$\ell>4c$, then construct~$G'$ as follows.
  For each connected component~$G_i = (V_i, E_i)$, do the following.
  If the connected component~$G_i = (V_i, E_i)$ is of order at most~$\ell/2$, then add~$G_i$ to~$G'$.
  Otherwise, if~$n_i:=|V_i| > \ell/2$, then we partition~$V_i$ as follows.
  Without loss of generality let $V_i$ be enumerated as $V_i=\{v_i^1,\ldots,v_i^{n_i}\}$.
  For every $p\in\{1,\ldots,4c\}$, define $V_i^p:=\{v_i^q\in V_i\mid q\bmod 4c=p-1\}$.
  This defines the partition~$V_i=V_i^1\uplus\cdots\uplus V_i^{4c}$.
  Then, for each $\{a_1,\ldots,a_c\}\in \binom{[4c]}{c}$, add the graph~$G[V_i^{a_1} \cup \ldots \cup V_i^{a_c}]$ to~$G'$.
  This completes the construction.
  \qedconstr{}
\end{constr}

\begin{figure}[t]
 \centering
 
\begin{tikzpicture}[scale=0.6]
  
  \def\pieces{6}
  \def\colorval{5}
  \def\percent{100/\pieces}

  \newcommand{\cake}[7]{
    
    \def\xsh{#2}
    \def\ysh{#3}
    
    \foreach \x in {#1}{
	    \def\angle{\x*\percent*3.6}
	    \pgfmathparse{\colorval*\x} 
	    \xdef\val{\pgfmathresult}  
	    \draw[dashed,color=lightgray] (\xsh,\ysh) circle  (\radius);
	    \draw[fill={gray!\val},draw={gray}] (\xsh,\ysh) -- ($(\angle:\radius)+(\xsh,\ysh)$)
	      arc  (\angle:\angle+\percent*3.6:\radius) -- cycle;
	    \node at ($(\angle+0.5*\percent*3.6:0.8*\radius)+(\xsh,\ysh)$) {$V_{\x}$};
    }
    \foreach \x in {#1}{
	\def\angle{\x*\percent*3.6}
	\ifnum\x=#4\node (x) at ($(\angle+0.5*\percent*3.6:0.3*\radius)+(\xsh,\ysh)$)[fill, circle,scale=1/2,label=0:$x$]{};\fi{}
	\ifnum\x=#5\node (y) at ($(\angle+0.5*\percent*3.6:0.3*\radius)+(\xsh,\ysh)$)[fill, circle,scale=1/2,label=-90:$y$]{};\fi{}
	\ifnum\x=#6\node (z) at ($(\angle+0.5*\percent*3.6:0.3*\radius)+(\xsh,\ysh)$)[fill, circle,scale=1/2,label=-90:$z$]{};
	  
	\fi{}
    }
    \ifnum#7=1\draw[thick] (x) --(y);\fi{}
    \ifnum#7=2\draw[thick] (z) --(y);\fi{}
    \ifnum#7=3\draw[thick] (x) -- (z);\fi{}
    \ifnum#7=4\draw[thick] (z) --(y) -- (x) -- (z);\fi{}
  }
  
  \def\radius{2}
  \cake{1,2,...,6}{0}{0}{1}{3}{5}{4}
   \node at (\xsh,\radius+0.5){$G$};
  \foreach \x in {0,1,...,5}{
    \draw[scale=1.5,->,>=stealth] (1.5,0) to (2.25,0.5-0.2*\x);
  }
  \def\radius{1.5}
  \cake{1,2,3}{5}{1.75}{1}{3}{5}{1}
  \cake{1,2,4}{5}{-1.75}{1}{3}{5}{0}
  
  \node at (5+2.5,0)[scale=1.5]{$\dots$};
  \node at (5+5,1.75+\radius+0.35){$G[V_1\cup V_3\cup V_{5}]$};
  \cake{1,3,5}{5+1*5}{1.75}{1}{3}{5}{4}
  \cake{1,4,5}{5+1*5}{-1.75}{1}{3}{5}{3}
  
  \node at (5+7.5,0)[scale=1.5]{$\dots$};
  \cake{3,5,6}{5+2*5}{1.75}{1}{3}{5}{2}
  \cake{4,5,6}{5+2*5}{-1.75}{1}{3}{5}{0}

  \end{tikzpicture}
  \caption{Schematic illustration of the idea behind our diminisher for the parameter order of the largest connected component.} %
  \label{fig:NWT-diminisher}
\end{figure}

Employing~\cref{constr:constsizedim}, we obtain the following.

\begin{proposition}
  \label{prop:NWToderstrictker}
 NWT and~TC parameterized by the order~$\ell$ of the largest connected component admit a strong ($n+m$)-diminisher.
\end{proposition}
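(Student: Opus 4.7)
The plan is to verify that \cref{constr:constsizedim} itself realizes a strong $(n+m)$-diminisher for both NWT and TC (with $c = 3$, since $H$ is a triangle). The argument splits into four pieces: an $O(n+m)$ running time, correctness of the equivalence, a factor-$2$ reduction of $\ell$, and a separate treatment of the base case $\ell \leq 4c$, in which the diminisher must decide the instance rather than reduce it.

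For the running time and parameter bound, I would first compute the connected components in $O(n+m)$ time. For each large component $G_i$ with $n_i > \ell/2$, partitioning $V_i$ into $V_i^1,\ldots,V_i^{4c}$ is a single linear scan. To emit the $\binom{4c}{c}$ induced subgraphs efficiently, I would iterate once over the edges of $G_i$: any edge with endpoints in parts $p, q$ lies in exactly $\binom{4c-2}{c-2}$ of the emitted subgraphs, which can be enumerated in $O(1)$ time because $c$ is constant. Each vertex and each edge is therefore replicated a constant number of times, giving $O(n+m)$ total time and output size. The parameter bound is then arithmetic: every new component produced from a large $G_i$ has at most $c \cdot \lceil n_i/(4c) \rceil \leq n_i/4 + c$ vertices, and every component kept as-is has at most $\ell/2$ vertices. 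The hypothesis $\ell > 4c$ gives $c < \ell/4$, so $\ell' \leq \ell/4 + c < \ell/2$, which is the strong-diminisher property with factor $2$.

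For correctness, the key observation is that each component of $G'$ is a vertex-induced subgraph of some original component of $G$ with inherited weights (for NWT) and inherited colors (for TC). The backward direction is then immediate, because every triangle in $G'$ is literally a triangle of $G$ with the same weights and colors. For the forward direction, a triangle $T$ in $G$ lies in some component $G_i$, and its three vertices fall into at most three distinct parts $V_i^{a_1}, V_i^{a_2}, V_i^{a_3}$; since $c = 3$, the induced subgraph $G[V_i^{a_1}\cup V_i^{a_2}\cup V_i^{a_3}]$ is one of the subgraphs emitted to $G'$, so $T$ is preserved there. For NWT this yields the equivalence at once, while for TC one has to run the argument per color triple and additionally verify that the color palette does not shrink; but every vertex of $G$ appears in at least one emitted subgraph, so the image of the coloring is unchanged.

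The main obstacle is the base case $\ell \leq 4c$, where no further reduction is possible and the diminisher must decide the instance in $O(n+m)$ time. For NWT this is straightforward: each component has $O(1)$ triangles, so enumerating them all and testing the weight sum takes $O(n+m)$ time in total. For TC the delicate point is the universal quantifier over $\binom{[f]}{3}$ with $f$ potentially as large as $n$. Here I would hash-collect the set of colors (to compute $f$) and the set $T$ of realized color triples by scanning the $O(n)$ triangles in the small components, all in $O(n)$ time. Since $T \subseteq \binom{[f]}{3}$ holds automatically, the instance is a \yes-instance iff $|T| = \binom{f}{3}$, a constant-time arithmetic check after the scan. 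Combining these ingredients gives the claimed strong $(n+m)$-diminisher.
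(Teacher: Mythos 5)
Your proposal is correct and follows the same strategy as the paper: it uses \cref{constr:constsizedim} with $c=3$, establishes the $O(n+m)$ running time and the $\ell'\le\max\{\ell/2,4c\}$ bound (exactly the content of \cref{lem:atmostsize}), proves the per-triangle equivalence by noting that a triangle's three vertices land in at most three parts so $G[V_i^{a_1}\cup V_i^{a_2}\cup V_i^{a_3}]$ preserves it (the content of \cref{lem:isomorph}), and handles the base case $\ell\le 4c$ by deciding directly. One place where you are actually more careful than the paper: the paper explicitly argues only that the NWT base case is decidable in $O(n)$ time and dismisses TC with ``the proof works analogously,'' whereas you spell out that for TC one must compare the set of realized color triples against $\binom{f}{3}$ and observe that both the number of triangles and the palette size are computable in $O(n)$ time once components have constant order. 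You also correctly flag that the color palette is preserved because every vertex survives into some emitted subgraph, a point the paper leaves implicit in ``we color its copies in $G'$ with the color of $v$.'' The only minor blemish is that hash-collecting color triples gives expected rather than worst-case linear time; since colors lie in $[f]\subseteq[n]$, a radix sort would make this deterministic, but this does not affect the result.
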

For the following two lemmas, let~$H$ be an arbitrary but fixed connected constant-size graph with~$c>1$ vertices and let~$G=(V,E)$ be a graph with order~$\ell$ of the largest connected component.

\begin{lemma}
  \label{lem:atmostsize}
  \cref{constr:constsizedim} outputs in $O(n+m)$ time a graph~$G'$ with connected components of order at most~$\max\{\ell/2,4c\}$ . 
\end{lemma}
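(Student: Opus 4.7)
The plan is to verify the two assertions of the lemma separately: (1) the total running time of \cref{constr:constsizedim} is $O(n+m)$, and (2) every connected component of the output graph $G'$ has order at most $\max\{\ell/2, 4c\}$. Both follow by direct inspection of the construction, with $c$ treated as a constant.

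For the time bound, I would first note that computing the connected components $G_1,\ldots,G_r$ of $G$ takes $O(n+m)$ via BFS/DFS. For each component with $n_i > \ell/2$, one pass over the enumerated vertex list produces the partition $V_i = V_i^1 \uplus \cdots \uplus V_i^{4c}$ in $O(n_i)$ time, using the residue $q \bmod 4c$ to assign each $v_i^q$ to its class. Building a single induced subgraph $G[V_i^{a_1}\cup\cdots\cup V_i^{a_c}]$ takes $O(n_i+m_i)$ time by scanning the edges of $G_i$ and keeping those whose endpoints both fall into the chosen union; since $c$ is a constant, the number $\binom{4c}{c}$ of such subsets is also $O(1)$. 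Summing over all components yields $O(n+m)$ total.

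For the size bound, I would split along the same case distinction as the construction. In the trivial branch $\ell \le 4c$ we set $G'=G$, whose largest component has order $\ell \le 4c$. In the nontrivial branch $\ell > 4c$, every ``small'' component ($|V_i| \le \ell/2$) is copied unchanged and already satisfies the bound. For a ``large'' component, the round-robin partition guarantees $|V_i^p| \le \lceil n_i/(4c) \rceil$, so every union of $c$ parts has at most
\[
c \cdot \lceil n_i/(4c) \rceil \;\le\; n_i/4 + c \;\le\; \ell/4 + c \;<\; \ell/2,
\]
where the last inequality uses $\ell > 4c$. Since each added induced subgraph has fewer than $\ell/2$ vertices, so does each of its connected components; combining the two branches yields the claimed bound $\max\{\ell/2, 4c\}$.

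I expect no real obstacle: the proof is a direct accounting argument. The only mildly delicate point is the ceiling term $\lceil n_i/(4c)\rceil$, which at first glance threatens to exceed the target $\ell/2$; the slack is absorbed precisely by the case assumption $\ell > 4c$, which is what motivates the specific choice of $4c$ as the number of parts (rather than, say, $2c$). This is the calibration that makes the diminisher ``strong'' in the sense of \cref{def:diminisher}, and flagging it explicitly will help motivate the constant before moving on to \cref{prop:NWToderstrictker}.
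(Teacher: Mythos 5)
Your proof is correct and follows essentially the same approach as the paper's: the size bound is established by the identical chain $c\cdot\lceil n_i/(4c)\rceil \le n_i/4 + c \le \ell/4 + c < \ell/2$ (with the last step using $\ell > 4c$), and the running time comes from $\binom{4c}{c}=O(1)$ being a constant number of copies. Your version merely spells out the BFS/DFS component computation and the edge-scan for each induced subgraph, details the paper dismisses as ``not difficult to see,'' and adds the (true, if elementary) remark that a component of an added subgraph cannot exceed that subgraph's order.
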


\begin{proof}
  In the case of~$\ell>4c$, note that $\lfloor n_i/(4c)\rfloor\leq |V_i^p|\leq \lceil n_i/(4c)\rceil$ for all $p\in\{1,\ldots,4c\}$.
  Moreover, $|V_i^{a_1} \cup \ldots \cup V_i^{a_c}|\leq c\cdot \lceil n_i/(4c)\rceil \leq \ell/4+c< \ell/2$.
  The size of~$G'$ is~$O(\binom{4c}{c}(n+m)) = O(n+m)$ as~$c$ is constant.
  It is not difficult to see that~$G'$ can be constructed in~$O(n+m)$ time.
\end{proof}

\begin{lemma}
	\label{lem:isomorph}
	Graph~$G$ contains a subgraph~$F=(V_F,E_F)$ isomorphic to~$H$ if and only if~$G'$, returned by~\cref{constr:constsizedim}, contains a subgraph~$F'=(V_F',E_F')$ isomorphic to~$H$, where $V_F'$  and $E_F'$ are copies of $V_F$ and $E_F$ in~$G'$, respectively.
\end{lemma}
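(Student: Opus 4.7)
The plan is to verify both directions separately, keying on the fact that $H$ (and hence any subgraph $F$ isomorphic to $H$) is connected and has exactly $c$ vertices, while every induced subgraph added to $G'$ in \cref{constr:constsizedim} is either a connected component of $G$ or an induced subgraph of some $G_i$. Thus $G'$ is essentially a ``multiset'' of induced subgraphs of $G$, which immediately handles one direction.

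For the ``if'' direction, assume $G'$ contains a subgraph $F'$ isomorphic to $H$. Every connected subgraph of $G'$ is contained in one of the blocks added during the construction, i.e.\ either in some small $G_i$ (unchanged and thus an induced subgraph of $G$) or in some $G[V_i^{a_1}\cup\cdots\cup V_i^{a_c}]$, which by definition is the subgraph of $G$ induced by a vertex subset of $V_i\subseteq V$. In either case, the copy of $V_{F'}$ in $V$ spans a subgraph of $G$ isomorphic to $F'$, and hence to $H$.

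For the ``only if'' direction, assume $G$ contains a subgraph $F$ isomorphic to $H$. Since $H$ is connected and $|V(F)|=c$, the vertex set $V_F$ lies entirely inside a single connected component $G_i$ of $G$. If $\ell\leq 4c$ or $|V_i|\leq\ell/2$, then $G_i$ is added verbatim to $G'$ and $F$ trivially occurs in $G'$. Otherwise $|V_i|>\ell/2$ and $\ell>4c$, so \cref{constr:constsizedim} partitions $V_i=V_i^1\uplus\cdots\uplus V_i^{4c}$. Each vertex of $V_F$ lies in exactly one part; let $\{b_1,\ldots,b_t\}\subseteq[4c]$ be the indices of parts that meet $V_F$, so $t\leq c\leq 4c$. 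Extend this index set arbitrarily to a set $\{a_1,\ldots,a_c\}\in\binom{[4c]}{c}$ (possible since $4c\geq c$). Then $V_F\subseteq V_i^{a_1}\cup\cdots\cup V_i^{a_c}$, so $F$ is a subgraph of $G[V_i^{a_1}\cup\cdots\cup V_i^{a_c}]$, which is a block added to $G'$; the required $F'$ is the copy of $F$ in that block.

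I do not expect a real obstacle here: the argument rests on two small observations, namely that $G'$ only contains induced subgraphs of $G$ (hence no spurious edges are created), and that any $c$-vertex subset of a large $V_i$ is contained in the union of at most $c$ of the $4c$ parts, which can always be enlarged to exactly $c$ parts. The only mild subtlety is to be careful with the boundary case $\ell\leq 4c$, where $G'=G$ and the claim is immediate, so that the partition argument only needs to be invoked when $\ell>4c$ and $|V_i|>\ell/2$.
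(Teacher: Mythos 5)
Your proposal is correct and follows essentially the same argument as the paper: the ``if'' direction uses that $G'$ is a disjoint union of induced subgraphs of $G$ and $H$ is connected, and the ``only if'' direction case-splits on $\ell\le 4c$, small components, and large components, noting that the $c$ vertices of $F$ hit at most $c$ of the $4c$ parts and hence lie in one of the added blocks. The only difference is that you spell out the ``extend the $\le c$ hit parts to exactly $c$ indices'' step, which the paper leaves implicit.
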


\begin{proof}
  Clearly, as $G'$ is a disjoint collection of induced subgraphs and~$H$ is connected, if~$G'$ contains a subgraph isomorphic to~$H$, then also~$G$ does.
  
  Let~$G$ contain a subgraph~$F$ isomorphic to~$H$.
  If $\ell\leq 4c$, then~$G'=G$ contains~$F$.
  Otherwise, if~$\ell>4c$, then consider the following two cases.
  If~$F$ is contained in a connected component in~$G$ of size at most~$\ell/2$, then~$F$ is also contained in~$G'$.
  Otherwise,~$F$ is contained in a connected component~$G_i$ of size larger than~$\ell/2$.
  Let~$V(F)\subseteq V_i^{a_1} \cup \ldots \cup V_i^{a_c}$ for some $\{a_1,\ldots,a_c\}\subseteq \binom{[4c]}{c}$ (recall that~$F$ contains~$c$ vertices).
  Then~$F$ is a subgraph of $G[V_i^{a_1} \cup \ldots \cup V_i^{a_c}]\subseteq G'$.
\end{proof}

With~$H$ being a triangle~($c=3$) while asking for negative weight, due to~\cref{lem:atmostsize,lem:isomorph}, we get a strong~$(n+m)$-diminishers for NWT.
When asking for a specific vertex-coloring, this also yields a strong~$(n+m)$-diminisher for~TC.
  
\begin{proof}[Proof of~\cref{prop:NWToderstrictker}]
  Given an edge-weighted graph~$G=(V,E,w)$, we apply~\cref{constr:constsizedim} to~$G$ with $H$~being a triangle (note that $c=3$) to obtain~$G'$.
  We introduce the edge-weights~$w'$ to~$G'$ by assigning for each edge~$e\in E$ its weight to all of its copies~$e'\in E(G')$.
  By~\cref{lem:atmostsize}, $G'$ is constructed in linear time.
  By~\cref{lem:isomorph} and the definition of~$w'$, $G'$ contains a negative weight triangle if and only if~$G$ does.
  Hence, this procedure is a strong linear-time diminisher with respect to the order~$\ell$ of the largest connected component, as (by \cref{lem:atmostsize}) either $\ell'\leq \ell/2$, or~$\ell'\leq 4c$ (implying~$G'=G$), where in the latter case our strong diminisher decides whether~$G'$ contains a triangle of negative weight in~$O(n)$~time.
  
  For TC, the proof works analogously except that for each vertex~$v\in V$, we color its copies in~$G'$ with the color of~$v$.
\end{proof}

There is a straight-forward~$O(k^2\cdot n)$-time algorithm for NWT and TC: Check for each vertex all pairs of other vertices in the same connected component. 
However, under the APSP-conjecture (and SETH for TC) there are no~$O(n^{3-\eps})$-time algorithms for any~$\eps > 0$~\cite{AWY15,WW10}. 
Combining this with our diminisher in \cref{prop:NWToderstrictker} we can exclude certain strict kernels as shown below.

\begin{proof}[Proof of~\cref{thm:mainklb}(i)]
  By~\cref{prop:NWToderstrictker}, we know that NWT admits a strong $(n+m)$-diminisher.
  Suppose that NWT admits a strict $(n^\alpha, k^\beta)$-kernel for~$\alpha \ge 1, \beta \ge 1$ with~$\alpha \cdot \beta =3-\eps_0$, $\eps_0>0$.
  It follows by~\cref{thm:kernel+dimin=alg} that NWT is solvable in~$t(n,k)\in O(k^{\beta\cdot\alpha}\log(k) +n^\alpha)$~time. 
  Observe that~$\log(k)\in O(k^{\eps_1})$ for~$0<\eps_1<\eps_0$.
  Together with~$k\leq n$ and $\alpha \cdot \beta=3-\eps_0$
  we get~$t(n,k)\in O(n^{3-\eps})$ with $\eps=\eps_0-\eps_1>0$.
  Hence, the APSP-conjecture breaks~\cite{WW10}.
  The proof for~TC works analogously.
\end{proof}

\subsection{Parameter Maximum Degree}

The diminisher described in~\cref{constr:constsizedim} does not necessarily decrease the maximum degree of the graph.
We thus adapt the diminisher to partition the edges of the given graph (using an (improper) edge-coloring) instead of its vertices.
Furthermore, if~$H$ is of order~$c$, then $H$ can have up to~$\binom{c}{2} \leq c^2$ edges.
Thus, our diminisher considers all possibilities to choose~$c^2$ (instead of~$c$) parts of the partition.
For the partitioning step, we need the following.

\begin{lemma}
  \label{lem:coloring}
	Let~$G = (V,E)$ be a graph with maximum degree~$\Delta$ and let~$b \in \N$.
	One can compute in~$O(b(n+m))$~time an (improper) edge-coloring~$\col\colon E \to \N$ with less than~$2b$ colors such that each vertex is incident to at most~$\lceil \Delta / b \rceil$ edges of the same color.
\end{lemma}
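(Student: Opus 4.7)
The plan is to build the coloring greedily. I will use exactly $2b-1$ colors (which is strictly less than $2b$) and set the threshold $t := \lceil \Delta/b \rceil$. For every vertex $v$ I maintain an array $d_v$ of length $2b-1$, where $d_v[c]$ counts the already-colored edges incident to $v$ that have received color $c$; all entries are initialized to $0$ in $O(bn)$ time. Then I process the edges in an arbitrary order: when an edge $e = \{u,v\}$ is considered, I scan the $2b-1$ colors, pick any color $c$ with $d_u[c] < t$ and $d_v[c] < t$, assign $\col(e) := c$, and increment $d_u[c]$ and $d_v[c]$.

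The heart of the argument will be to show that a valid color always exists when $e$ is processed. Just before coloring $e$, vertex $u$ is incident to at most $\Delta - 1$ already-colored edges, so the number of colors that are already \emph{saturated} at $u$ (i.e.\ with $d_u[c] = t$) is at most
\[
  \left\lfloor \frac{\Delta-1}{t} \right\rfloor \;\le\; \left\lfloor \frac{(\Delta-1)b}{\Delta} \right\rfloor \;=\; \left\lfloor b - \tfrac{b}{\Delta} \right\rfloor \;\le\; b-1,
\]
using $t \ge \Delta/b$ and $\Delta \ge 1$ (the case $\Delta = 0$ is vacuous since then there are no edges to color). The same bound holds at $v$, so at most $2(b-1) = 2b-2$ of the $2b-1$ colors are blocked at the endpoints of $e$, leaving at least one valid choice. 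By maintenance of the counters $d_v[\cdot]$, the final coloring then satisfies $d_v[c] \le t = \lceil \Delta/b \rceil$ for every vertex $v$ and every color $c$.

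For the running time, initialization costs $O(bn)$, and each edge is handled in $O(b)$ time (scan the $2b-1$ color counters of both endpoints, pick a color, update two counters), summing to $O(b(n+m))$ as required.

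I expect the main obstacle to be making the pigeonhole count in the second paragraph tight enough: the budget of $2b-1$ colors leaves no slack, and a naive count (say, allowing each endpoint to block up to $b$ colors) would force $2b+1$ colors. The crucial observation is that once the ceiling $t = \lceil \Delta/b \rceil$ is paired with the refined bound $\Delta-1$ (rather than $\Delta$) on previously colored incident edges, the inequality $\lfloor(\Delta-1)/t\rfloor < b$ is strict and the argument closes. Beyond this counting detail, the greedy procedure and the time accounting are routine.
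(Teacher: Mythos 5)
Your proof is correct and follows essentially the same greedy strategy as the paper: process edges one by one, track per-vertex per-color counts, and argue via pigeonhole that with $2b-1$ colors a valid choice always exists. The one thing you do more carefully is spell out the pigeonhole computation $\lfloor (\Delta-1)/\lceil\Delta/b\rceil \rfloor \le b-1$ (using the key fact $\lceil\Delta/b\rceil \ge \Delta/b$ and the refined budget $\Delta-1$ of previously colored incident edges), which the paper simply asserts as "at most $b-1$ unavailable colors" without justification; you also correctly size the per-vertex arrays at $2b-1$ rather than the paper's stated $b-1$, which appears to be an imprecision there.
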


\begin{proof}
    The edge-coloring can be computed in~$O(b(n+m))$~time with a simple generalization of a folklore greedy algorithm to compute a proper edge-coloring ($b = \Delta$):
    Consider the edges one by one and assign each edge the first available color.
    Observe that at any considered edge each of the two endpoints can have at most~$b-1$ unavailable colors, that is, colors that are used on~$\lceil \Delta / b \rceil$ other edges incident to the respective vertex.
    Hence, the greedy algorithm uses at most~$2b-1$ colors.
    The algorithm stores at every vertex an array of length~$b-1$ to keep track of the number of edges with the respective colors.
    Thus, the algorithm can for each edge simply try all colors at each edge in $O(b)$~time.
    Altogether, this gives $O(b(n+m))$~time to compute the edge-coloring.
\end{proof}

\begin{constr}
 \label{constr:maxdegdeg}
  Let~$H$ be an arbitrary but fixed connected constant-size graph of order~$c>1$.
  Let~$G=(V,E)$ be a graph with maximum degree~$\Delta$.
  First, employ~\cref{lem:coloring} to compute an (improper) edge-coloring~$\col\colon E \to \N$ with~$4c^2\leq f < 8c^2$ many colors (without loss of generality we assume $\Im(\col)=\{1,\ldots,f\}$) such that each vertex is incident to at most~$\lceil \Delta / (4c^2) \rceil$ edges of the same color.
  
  Now, construct a graph~$G'$ as follows.
  Let~$G'$ be initially the empty graph.
  If~$\Delta\leq 4c^2$, then set~$G'=G$.
  Otherwise, if~$\Delta>4c^2$, then construct~$G'$ as follows.
  We first partition~$E$:
  Let $E^p$ be the edges of color~$p$ for every $p\in\{1,\ldots,f\}$.
  Clearly, $E=E^1\uplus\cdots\uplus E^{f}$.
  Then, for each~$\{a_1,\ldots,a_{c^2}\}\in\binom{[f]}{c^2}$, add the graph~$(V,E^{a_1} \cup \ldots \cup E^{a_{c^2}})$ to~$G'$.
  This completes the construction.
  \qedconstr
\end{constr}

\begin{proposition}
 \label{prop:dimwrtmaxdeg}
  NWT and~TC parameterized by maximum degree~$\Delta$ admit a strong ($n+m$)-diminisher.
\end{proposition}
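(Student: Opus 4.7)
The plan is to verify that \cref{constr:maxdegdeg} yields a strong $(n+m)$-diminisher by checking three properties in turn: (a) it runs in $O(n+m)$ time and outputs a graph of size $O(n+m)$; (b) the maximum degree of the output $G'$ drops by a constant factor (or is already bounded by a constant); and (c) $G$ contains a subgraph isomorphic to $H$ (with the relevant weight/coloring constraint) if and only if $G'$ does. Since $c$ is a fixed constant, the weight/coloring information on edges/vertices simply carries over to each copy, so the only substantive work is establishing these three properties at the level of subgraph isomorphism.

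For (a), I would invoke \cref{lem:coloring} with $b=4c^2$: the coloring is computed in $O(b(n+m)) = O(n+m)$ time, using $f < 2b = 8c^2$ colors. The number of color-subsets considered is $\binom{f}{c^2} = O(1)$, and each added subgraph has at most $n$ vertices and $m$ edges, so the total construction time and output size are both $O(n+m)$. For (b), fix any subgraph $(V, E^{a_1}\cup\cdots\cup E^{a_{c^2}})$ added to $G'$. By the guarantee of \cref{lem:coloring}, every vertex is incident to at most $\lceil \Delta/(4c^2)\rceil$ edges of any single color class, hence to at most
\[
c^2 \cdot \left\lceil \frac{\Delta}{4c^2} \right\rceil \;\le\; c^2\cdot\left(\frac{\Delta}{4c^2} + 1\right) \;=\; \frac{\Delta}{4} + c^2
\]
edges of the union. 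When $\Delta > 4c^2$, this is at most $\Delta/2$, so the maximum degree $\Delta'$ of $G'$ satisfies $\Delta'\leq \Delta/2$; otherwise the construction sets $G'=G$ and we are in the trivial base case. This yields a strong diminisher with constant $c' = 2$.

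For (c), the backward direction is immediate: each added component is a subgraph of $G$ on the same vertex set, so any $H$-copy in $G'$ projects to an $H$-copy in $G$ with identical weights/colors. For the forward direction, let $F\subseteq G$ be isomorphic to $H$. Then $|E(F)|\le \binom{c}{2} \le c^2$, so the edges of $F$ use at most $c^2$ distinct colors from $[f]$; since $f \ge 4c^2 \ge c^2$, we can extend this set to some $\{a_1,\dots,a_{c^2}\}\in\binom{[f]}{c^2}$, and then $F$ is a subgraph of $(V, E^{a_1}\cup\cdots\cup E^{a_{c^2}})\subseteq G'$. Transferring edge weights (for NWT) to all copies of each original edge and vertex colors (for TC) to all copies of each original vertex preserves the triangle weight and the color triple, so the equivalence upgrades to the problem-specific version. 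The main obstacle is the degree-halving estimate in (b): the ceiling in $\lceil \Delta/(4c^2)\rceil$ forces the case distinction at $\Delta = 4c^2$, which is exactly why the construction branches on $\Delta \le 4c^2$ and why the choice $b = 4c^2$ (rather than $2c^2$) in \cref{lem:coloring} is used.
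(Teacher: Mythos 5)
Your proof is correct and follows essentially the same route as the paper: it invokes \cref{lem:coloring} with $b=4c^2$, establishes the $O(n+m)$ running time and output size via $\binom{f}{c^2}=O(1)$, derives the degree bound $c^2\lceil\Delta/(4c^2)\rceil\le\Delta/4+c^2<\Delta/2$, and proves the subgraph-isomorphism equivalence by extending the at most $c^2$ colors used by $F$ to a $c^2$-subset of $[f]$, with weights/colors carried over to copies — exactly the content of \cref{lem:lintimemaxdeg,lem:equivmaxdeg} and the paper's proof of \cref{prop:dimwrtmaxdeg}. The only point worth tightening is the ``trivial base case'' remark: by \cref{def:diminisher} a diminisher that cannot shrink the parameter must \emph{decide} the instance, so one should say explicitly that when $\Delta\le 4c^2$ the algorithm solves NWT/TC directly in $O(n)$ time (which is immediate since $\Delta=O(1)$), as the paper does.
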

For the following two lemmas, let~$H$ be an arbitrary but fixed connected constant-size graph with~$c>1$ vertices and let~$G=(V,E)$ be a graph with maximum degree~$\Delta$.

\begin{lemma}
  \label{lem:lintimemaxdeg}
  \cref{constr:maxdegdeg} outputs a graph~$G'$ in $O(n+m)$ time with maximum degree~$\Delta(G')\leq \max\{\Delta/2,4c^2\}$. 
\end{lemma}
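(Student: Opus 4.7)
The plan is to split the argument into two parts: first bound the maximum degree of each of the subgraphs added to $G'$, and then bound the overall running time. Both parts hinge on the fact that $c$ is a constant, hence $f$ (and therefore $\binom{f}{c^2}$) is a constant as well.

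For the degree bound, I would first dispatch the trivial case $\Delta \leq 4c^2$, where $G'=G$ and the bound holds by definition. For the case $\Delta > 4c^2$, recall from \cref{lem:coloring} that every vertex $v \in V$ is incident to at most $\lceil \Delta/(4c^2) \rceil$ edges of any fixed color. Since each subgraph added to $G'$ takes the union of exactly $c^2$ color classes, every vertex has at most
\[
c^2 \cdot \left\lceil \frac{\Delta}{4c^2} \right\rceil \;\leq\; c^2 \cdot \left( \frac{\Delta}{4c^2} + 1 \right) \;=\; \frac{\Delta}{4} + c^2
\]
incident edges in that subgraph. Using $\Delta > 4c^2$, i.e.\ $c^2 < \Delta/4$, this is strictly less than $\Delta/2$. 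Because the subgraphs added to $G'$ are kept as disjoint components (each with its own copy of the vertex set), the maximum degree of $G'$ is the maximum over the maximum degrees of the added subgraphs, yielding $\Delta(G') \leq \max\{\Delta/2, 4c^2\}$.

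For the running-time bound, \cref{lem:coloring} invoked with $b = 4c^2$ produces the edge-coloring in $O(c^2(n+m)) = O(n+m)$ time, since $c$ is constant. Partitioning $E$ into color classes $E^1, \ldots, E^f$ can be done in $O(n+m)$ time while the coloring is produced. Since $f < 8c^2$, there are only $\binom{f}{c^2} = O(1)$ many $c^2$-subsets of colors to consider, and each corresponding subgraph $(V, E^{a_1} \cup \cdots \cup E^{a_{c^2}})$ has size $O(n+m)$ and is built in $O(n+m)$ time. Combining the $O(1)$ many subgraphs yields the claimed $O(n+m)$ total time.

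The only mildly delicate point is the $+c^2$ slack coming from the ceiling, which is exactly why the construction picks the threshold $4c^2$ rather than something smaller: it guarantees $c^2 \leq \Delta/4$ in the non-trivial regime, so that $\Delta/4 + c^2 \leq \Delta/2$. Everything else is routine accounting, and I do not expect a real obstacle.
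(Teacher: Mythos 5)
Your proof is correct and follows essentially the same approach as the paper's: invoke \cref{lem:coloring} to bound the per-color degree at each vertex by $\lceil \Delta/(4c^2) \rceil$, multiply by the $c^2$ color classes in each output component to get $\Delta/4 + c^2 < \Delta/2$ when $\Delta > 4c^2$, and observe that the constantly many constant-size subsets of colors give an $O(n+m)$ overall running time. The paper's proof is terser and omits the explicit remark that $G'$ is a disjoint union of the generated subgraphs, but the argument is the same.
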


\begin{proof}
      In the case of~$\Delta>4c^2$, each vertex is incident to at most~$\lceil \Delta/(4c^2)\rceil$ edges of~$E^p$ for all $p\in\{1,\ldots,f\}$.
      Thus, in~$(V,E^{a_1} \cup \ldots \cup E^{a_{c^2}})$ the maximum degree is at most $c^2 \cdot \lceil \Delta/(4c^2)\rceil \leq \Delta/4+c^2< \Delta/2$.
      Using \cref{lem:coloring} with~$b = 4c^2 \in O(1)$, it is not difficult to see that~$G'$ is constructed in~$O(n+m)$ time.
\end{proof}

\begin{lemma}
  \label{lem:equivmaxdeg}
  Graph~$G$ contains a subgraph~$F=(V_F,E_F)$ isomorphic to~$H$ if and only if~$G'$, returned by~\cref{constr:maxdegdeg}, contains a subgraph~$F'=(V_F',E_F')$ isomorphic to~$H$, where $V_F'$  and $E_F'$ are copies of $V_F$ and $E_F$ in~$G'$, respectively.
\end{lemma}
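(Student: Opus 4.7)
The plan is to mirror the two-direction argument of Lemma~\ref{lem:isomorph}, adapting it to the edge-partitioning setup: the partition now lives on the edge set (given by the edge-coloring from \cref{lem:coloring}), and the number of parts we union together in \cref{constr:maxdegdeg} is $c^2$ rather than $c$ because $H$ has at most $\binom{c}{2}\le c^2$ edges. The key identity I will rely on is that each graph added to $G'$ in \cref{constr:maxdegdeg} is of the form $(V, E^{a_1}\cup\cdots\cup E^{a_{c^2}})$ with $E^{a_j}\subseteq E$, so every such graph is (isomorphic to) a subgraph of $G$; meanwhile distinct summands in $G'$ share no vertices, being disjoint copies, so any connected subgraph of $G'$ lies entirely within one summand.

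For the ``if'' direction, assume $G'$ contains a subgraph $F'$ isomorphic to $H$. Since $H$ is connected, $F'$ lies inside exactly one of the disjoint graphs added to $G'$, call it $(V, E^{a_1}\cup\cdots\cup E^{a_{c^2}})$ (or $G$ itself, in the case $\Delta\le 4c^2$ where $G'=G$). Its edges are a subset of $E$, so the natural identification yields a subgraph $F$ of $G$ isomorphic to $H$.

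For the ``only if'' direction, let $F=(V_F,E_F)$ be a subgraph of $G$ isomorphic to $H$. If $\Delta\le 4c^2$, then $G'=G$ and we are done. Otherwise, note that $|E_F|\le \binom{c}{2}\le c^2$, so the set $A := \{\col(e) : e\in E_F\}$ of colors used by the edges of $F$ satisfies $|A|\le c^2$. Because $f\ge 4c^2>c^2$, we can extend $A$ to a set $A'=\{a_1,\ldots,a_{c^2}\}\in\binom{[f]}{c^2}$ of exactly $c^2$ distinct colors. Then $E_F\subseteq E^{a_1}\cup\cdots\cup E^{a_{c^2}}$, so $F$ appears (as a natural copy $F'$ with vertex and edge sets corresponding to $V_F$ and $E_F$) inside the summand $(V,E^{a_1}\cup\cdots\cup E^{a_{c^2}})$ that \cref{constr:maxdegdeg} added to $G'$ for the choice $A'$.

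The only place a subtlety could arise is ensuring $f\ge c^2$ so that the extension step is well-defined, but this is immediate from $f\ge 4c^2$; the rest is just bookkeeping about copies of vertices/edges across the disjoint summands of $G'$. Thus I do not expect any real obstacle; the lemma is essentially the edge-coloring analogue of \cref{lem:isomorph}, with the bound $|E(H)|\le c^2$ replacing the bound $|V(H)|=c$ used there.
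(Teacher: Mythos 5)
Your proposal is correct and follows essentially the same two-direction argument as the paper's proof: the ``if'' direction relies on $G'$ being a disjoint union of (copies of spanning) subgraphs of $G$ and $H$ being connected, while the ``only if'' direction uses $|E(H)|\le c^2$ to fit the edges of $F$ into some $c^2$-subset of colors, just as the paper does. You spell out the extension of the used-color set to exactly $c^2$ colors (using $f\ge 4c^2>c^2$), which the paper leaves implicit, but this is a minor elaboration rather than a different route.
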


\begin{proof}
      Clearly, as $G'$ is a disjoint collection of subgraphs and~$H$ is connected, if~$G'$ contains a subgraph isomorphic to~$H$, then also~$G$ does. 
    Let~$G$ contain a subgraph~$F$ isomorphic to~$H$.
    If $\Delta\leq 4c^2$, then~$G'=G$ contains~$F$.
    Otherwise, if~$\Delta>4c^2$, then let~$E(F)\subseteq E^{a_1} \cup \ldots \cup E^{a_{c^2}}$ for some $\{a_1,\ldots,a_{c^2}\}\subseteq \binom{[f]}{c^2}$ (recall that~$F$ contains at most~$c^2$ edges).
    Then~$F$ is a subgraph of $(V,E^{a_1} \cup \ldots \cup E^{a_{c^2}})\subseteq G'$. 
\end{proof}

\begin{proof}[Proof of~\cref{prop:dimwrtmaxdeg}]
  Given an edge-weighted graph~$G=(V,E,w)$, we apply~\cref{constr:maxdegdeg} to~$G$ with $H$~being a triangle (note that $c=3$) to obtain~$G'$.
  We introduce the edge-weights~$w'$ to~$G'$ by assigning for each edge~$e\in E$ its weight to all of its copies~$e'\in E(G')$.
  By~\cref{lem:lintimemaxdeg}, $G'$ is constructed in linear time.
  By~\cref{lem:equivmaxdeg} and the definition of~$w'$, $G'$ contains a negative weight triangle if and only if~$G$ does.
  Hence, this procedure is a strong linear-time diminisher with respect to the maximum degree, as (by \cref{lem:lintimemaxdeg}) either $\Delta(G')\leq \Delta/2$, or~$\Delta(G')\leq 4c^2$, where in the latter case our strong diminisher decides whether~$G'$ contains a triangle of negative weight in~$O(n)$ time.
  
  For TC, the proof works analogously except that for each vertex~$v\in V$, we color its copies in~$G'$ with the color of~$v$.
\end{proof}

\subsection{Parameter Degeneracy}

The degeneracy of a graph is the smallest number~$d$ such that every induced subgraph contains a vertex of degree at most~$d$.
For parameter degeneracy, the diminisher follows the same idea as the diminisher for the parameter maximum degree (see~\cref{constr:maxdegdeg}).
The only difference between the two diminishers is how the partition of edge set is obtained.

\begin{constr}
 \label{constr:degeneracy}
  Let~$H$ be an arbitrary but fixed constant-size graph of order~$c>1$.
  Let~$G=(V,E)$ be a graph with degeneracy~$d$.
  First, compute a degeneracy ordering\footnote{This is an ordering of the vertices such that each vertex~$v$ has at most~$d$ neighbors ordered after~$v$.}~$\sigma$ in~$O(n+m)$~time~\cite{MB83}. 
  Construct a graph~$G'$ as follows.
  
  Let~$G'$ be initially the empty graph.
  If~$d\leq 4c^2$, then set~$G'=G$.
  Otherwise, if~$d>4c^2$, then construct~$G'$ as follows.
  First, for each vertex~$v\in V$, we partition the edge set~$E_v:=\{\{v,w\}\in E\mid \sigma(v)<\sigma(w)\}$ going to the right of~$v$ with respect to~$\sigma$ into $4c^2$ parts.
  Let $E_v$ be enumerated as $\{e_1,\ldots,e_{|E_v|}\}$.
  For each~$v$, we define~$E_v^p:=\{e_i\in E_v\mid i\bmod 4c^2=p-1\}$ for every $p\in[4c^2]$.
  Clearly, $E_v=E_v^1\uplus\cdots\uplus E_v^{4c^2}$.
  Next, we define $E^p:=\bigcup_{v\in V} E_v^p$ for every $p\in[4c^2]$.
  Clearly, $E=\biguplus_{1\leq p\leq 4c^2} E^p = \biguplus_{1\leq p\leq 4c^2} \biguplus_{v\in V} E_v^p$.
  Then, for each~$\{a_1,\ldots,a_{c^2}\}\in\binom{[4c^2]}{c^2}$, add the graph~$(V,E^{a_1} \cup \ldots \cup E^{a_{c^2}})$ to~$G'$.
  This completes the construction.
  \qedconstr
\end{constr}

\begin{proposition}
 \label{prop:dimwrtdegen}
  NWT and~TC parameterized by degeneracy admit a strong ($n+m$)-diminisher.
\end{proposition}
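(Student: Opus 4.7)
The plan is to follow the template set by the proofs of \cref{prop:NWToderstrictker,prop:dimwrtmaxdeg}: establish (i) a running time and degeneracy bound on the output of \cref{constr:degeneracy}, and (ii) an equivalence between $H$-subgraphs in $G$ and in $G'$. The weights (for NWT) and colors (for TC) transfer to $G'$ exactly as before by copying them onto each copy of an edge or vertex, so the real content is the graph-theoretic analysis, which we carry out with $H$ being a triangle ($c=3$).

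For the running time, a degeneracy ordering $\sigma$ can be computed in $O(n+m)$ time by the algorithm of Matula and Beck, as already noted in the construction. Computing the partition $E_v = E_v^1 \uplus \cdots \uplus E_v^{4c^2}$ requires only a single pass through $E$, bucketing each edge $\{v,w\}$ with $\sigma(v) < \sigma(w)$ according to its index in $E_v$ modulo $4c^2$. Since $\binom{4c^2}{c^2}$ and $c^2$ are constants, assembling the $\binom{4c^2}{c^2}$ subgraphs $(V, E^{a_1} \cup \cdots \cup E^{a_{c^2}})$ takes overall $O(n+m)$ time.

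The core claim is that the degeneracy of $G'$ is at most $\max\{d/2,\, 4c^2\}$. The key observation is that $\sigma$ itself is a degeneracy witness for every subgraph placed into $G'$. Indeed, by definition of a degeneracy ordering we have $|E_v| \le d$ for every $v$, hence $|E_v^p| \le \lceil d/(4c^2) \rceil$ for each $p \in [4c^2]$. In any fixed subgraph $(V, E^{a_1} \cup \cdots \cup E^{a_{c^2}})$, the number of forward neighbors of $v$ with respect to $\sigma$ is therefore bounded by
\[
  \sum_{j=1}^{c^2} |E_v^{a_j}| \;\le\; c^2 \cdot \left\lceil \frac{d}{4c^2} \right\rceil \;\le\; \frac{d}{4} + c^2 \;<\; \frac{d}{2},
\]
provided $d > 4c^2$. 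Since $G'$ is a disjoint union of such subgraphs, $\sigma$ (restricted to each) certifies degeneracy at most $d/2$; in the remaining case $d \le 4c^2$ we set $G' = G$ and can decide the instance in constant time per component.

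Finally, the isomorphism equivalence goes exactly as in \cref{lem:isomorph,lem:equivmaxdeg}. The reverse direction is trivial: each subgraph in $G'$ is a subgraph of $G$ on the same vertex set, so any copy of $H$ in $G'$ appears in $G$. For the forward direction, let $F \subseteq G$ be a copy of $H$; then $|E(F)| \le \binom{c}{2} \le c^2$, so there exists a set $\{a_1,\ldots,a_{c^2}\} \subseteq [4c^2]$ of indices with $E(F) \subseteq E^{a_1} \cup \cdots \cup E^{a_{c^2}}$, and $F$ embeds into the corresponding subgraph in $G'$. The only mild subtlety compared with the maximum-degree case is conceptual rather than technical, and is also the point where care is required: degeneracy is an ordering-based invariant, so partitioning \emph{all} incident edges at each vertex (as in \cref{constr:maxdegdeg}) would not obviously reduce degeneracy; partitioning only the \emph{forward} edges with respect to a fixed degeneracy ordering is what makes the same $\sigma$ witness degeneracy $d/2$ in the output.
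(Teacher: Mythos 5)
Your proof is correct and follows essentially the same approach as the paper's: it establishes the $O(n+m)$ running time and the degeneracy bound $\max\{d/2,4c^2\}$ via the same forward-degree calculation $c^2\lceil d/(4c^2)\rceil \le d/4+c^2 < d/2$, proves the subgraph equivalence by the same ``at most $c^2$ edges lie in at most $c^2$ color classes'' argument, and handles NWT/TC by copying weights/colors onto $G'$. Your closing remark, that partitioning only forward edges (rather than all incident edges) is what lets $\sigma$ itself witness the reduced degeneracy, is a clean articulation of what the paper leaves implicit.
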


For the following two lemmas, let~$H$ be an arbitrary but fixed connected constant size graph of order~$c>1$ and let~$G=(V,E)$ be a graph with degeneracy~$d$.
  
\begin{lemma}
    \label{lem:lintimedegen}
    \cref{constr:degeneracy} outputs a graph~$G'$ in $O(n+m)$ time with degeneracy at most~$\max\{d/2,4c^2\}$. 
\end{lemma}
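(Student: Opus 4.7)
The strategy mirrors the proof of \cref{lem:lintimemaxdeg}: use the degeneracy ordering $\sigma$ produced by the algorithm of~\cite{MB83} as a single witness for both the running time and the degeneracy bound on each piece of $G'$. The novelty here is that edges are bucketed \emph{per vertex} along $\sigma$, which is precisely what allows $\sigma$ to survive as a low-right-degree ordering in every constructed piece.

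For the running time, computing $\sigma$ costs $O(n+m)$ time. Iterating once over the right-neighbours of each vertex $v$ under $\sigma$ (which together enumerate every edge of $G$ exactly once) and assigning the $i$-th such edge to $E_v^p$ with $p=(i\bmod 4c^2)+1$ takes $O(n+m)$ time in total. Finally, the construction produces $\binom{4c^2}{c^2}=O(1)$ pieces, each a subgraph of $G$ of size $O(n+m)$, so the entire construction runs in $O(n+m)$ time.

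For the degeneracy bound, the case $d\leq 4c^2$ is trivial since $G'=G$. Suppose $d>4c^2$ and fix any piece $G_a:=(V, E^{a_1}\cup\ldots\cup E^{a_{c^2}})$. Because $\sigma$ is a degeneracy ordering of $G$, every vertex $v$ has $|E_v|\leq d$ right-neighbours, so the bucketing yields $|E_v^p|\leq \lceil d/(4c^2)\rceil$ for each $p$. The right-degree of $v$ in $G_a$ (still with respect to $\sigma$) is therefore at most
\[
  \sum_{i=1}^{c^2}|E_v^{a_i}|\;\leq\;c^2\cdot\bigl\lceil d/(4c^2)\bigr\rceil\;\leq\;d/4+c^2\;<\;d/2,
\]
using $c^2<d/4$, which follows from $d>4c^2$. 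It is a standard fact that an ordering in which every vertex has at most $k$ right-neighbours certifies degeneracy at most $k$ (its reverse is an elimination order). Hence each piece has degeneracy strictly below $d/2$, and since $G'$ is the disjoint union of the pieces (as in \cref{lem:isomorph}), its degeneracy equals the maximum over pieces and thus satisfies the bound.

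The main obstacle is the choice of \emph{how} to partition the edges: a global edge-colouring as in \cref{lem:coloring} would bound the maximum degree of a piece but not its degeneracy, because a vertex late in $\sigma$ could still collect many edges in a single colour class coming from its many left-neighbours. Bucketing the right-edges per vertex charges every edge $\{v,w\}$ with $\sigma(v)<\sigma(w)$ exclusively to $v$'s buckets, so the right-degree in every piece is controlled and the degeneracy analysis goes through.
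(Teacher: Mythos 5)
Your proof is correct and follows the same approach as the paper: both use the degeneracy ordering~$\sigma$ as a common witness, bound the per-vertex right-degree in each constructed piece by~$c^2\cdot\lceil d/(4c^2)\rceil\leq d/4+c^2<d/2$, and observe that the construction runs in~$O(n+m)$ time since $\binom{4c^2}{c^2}\in O(1)$. The only difference is cosmetic—you bound the right-degree in a union piece directly, while the paper first bounds the degeneracy of a single color class $(V,E^p)$ and then sums—and you spell out two small facts the paper leaves implicit (that a bounded-right-degree ordering certifies degeneracy, and that degeneracy of a disjoint union is the max over components).
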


\begin{proof}
    In the case of~$d>4c^2$, for each $p\in[4c^2]$, the degeneracy of~$F:=(V,E^p)$ is at least $\lfloor d/(4c^2)\rfloor$ and at most~$\lceil d/(4c^2)\rceil$.
    To see this, consider $F$ with ordering~$\sigma$ on its vertices~$V(F)$.
    Then, for each~$v\in V(F)$, exactly~$\lfloor |E_v|/(4c^2)\rfloor \leq |E_v^p|\leq \lceil |E_v|/(4c^2)\rceil$ vertices~$w$ with~$\sigma(w)>\sigma(v)$ are incident with~$v$ in~$F$.
    As~$|E_v|\leq d$, the claim follows.
    Moreover, the degeneracy of~$(V,E^{a_1} \cup \ldots \cup E^{a_{c^2}})$ is at most $c^2\cdot \lceil d/(4c^2)\rceil \leq d/4+c^2< d/2$.
    It is not difficult to see that~$G'$ is constructed in~$O(n+m)$ time.
\end{proof}

\begin{lemma}
    \label{lem:equivdegen}
    Graph~$G$ contains a subgraph~$F=(V_F,E_F)$ isomorphic to~$H$ if and only if~$G'$, returned by~\cref{constr:degeneracy}, contains a subgraph~$F'=(V_F',E_F')$ isomorphic to~$H$, where $V_F'$  and $E_F'$ are copies of $V_F$ and $E_F$ in~$G'$, respectively.
\end{lemma}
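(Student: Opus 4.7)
The plan is to mimic the proof of \cref{lem:equivmaxdeg} essentially verbatim, since \cref{constr:degeneracy} differs from \cref{constr:maxdegdeg} only in how the edge set is partitioned into the parts $E^1, \ldots, E^{4c^2}$, while the subsequent union-of-$c^2$-parts step is identical. The key structural fact to exploit is that every graph added to $G'$ has vertex set $V$ and edge set a union of subsets of $E$, so each of these added graphs is a subgraph of $G$.

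First I would dispatch the ``only if'' direction: if $G'$ contains a subgraph $F'$ isomorphic to $H$, then $F'$ lies entirely in one of the added graphs $(V, E^{a_1}\cup\cdots\cup E^{a_{c^2}})$, which is itself a subgraph of $G$; hence the same edges (viewed in $G$) yield a copy of $H$ in $G$. The degenerate case $d\leq 4c^2$ gives $G'=G$, in which case both directions are trivial.

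For the ``if'' direction, assume $G$ contains a copy $F$ of $H$ and that $d>4c^2$. Since $|V(F)|=c$, we have $|E(F)|\leq \binom{c}{2}\leq c^2$. Each edge of $F$ belongs to exactly one class $E^p$ of the partition $E=E^1\uplus\cdots\uplus E^{4c^2}$ defined in \cref{constr:degeneracy}. Let $P\subseteq [4c^2]$ be the set of indices $p$ for which $E^p\cap E(F)\neq\emptyset$; then $|P|\leq c^2$, and because $4c^2\geq c^2$ we can extend $P$ to some $\{a_1,\ldots,a_{c^2}\}\in\binom{[4c^2]}{c^2}$. By construction $E(F)\subseteq E^{a_1}\cup\cdots\cup E^{a_{c^2}}$, so $F$ appears as a subgraph of $(V,E^{a_1}\cup\cdots\cup E^{a_{c^2}})\subseteq G'$, which gives the desired copy $F'$.

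I do not expect any real obstacle: the partition mechanism of \cref{constr:degeneracy} (via degeneracy ordering and right-going edges $E_v$) is only used to control the degeneracy of the resulting parts, which is irrelevant for this lemma; all that matters for correctness is that $\{E^1,\ldots,E^{4c^2}\}$ is a partition of $E$ and that we take unions over all $c^2$-subsets of indices. The only place to be mildly careful is the edge bound $|E(H)|\leq c^2$ (rather than $\leq\binom{c}{2}$), which is what forces the choice of $4c^2$ parts in the construction and justifies that some $c^2$-subset of indices covers $E(F)$.
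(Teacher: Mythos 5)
Your proof is correct and follows essentially the same route as the paper's: both directions are handled identically, with the ``if'' direction resting on the observation that $F$ has at most $c^2$ edges, so some $c^2$-subset of the partition classes covers $E(F)$, and hence $F$ appears in one of the graphs $(V, E^{a_1}\cup\cdots\cup E^{a_{c^2}})$ added to $G'$. You merely spell out the choice of the index set $\{a_1,\ldots,a_{c^2}\}$ slightly more explicitly than the paper does.
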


\begin{proof}
      Clearly, as $G'$ is a disjoint collection of subgraphs, if~$G'$ contains a subgraph isomorphic to~$H$, then also~$G$ does.
    Let~$G$ contain a subgraph~$F$ isomorphic to~$H$.
    If $d\leq 4c^2$, then~$G'=G$ contains~$F$.
    Otherwise, if~$d>4c^2$, then let~$E(F)\subseteq E^{a_1} \cup \ldots \cup E^{a_{c^2}}$ for some $\{a_1,\ldots,a_{c^2}\}\subseteq \binom{[4c^2]}{c^2}$ (recall that~$F$ contains at most~$c^2$ edges).
    Then~$F$ is a subgraph of $(V,E^{a_1} \cup \ldots \cup E^{a_{c^2}})\subseteq G'$. 
\end{proof}
  
\begin{proof}[Proof of~\cref{prop:dimwrtdegen}]
  Given an edge-weighted graph~$G=(V,E,w)$, we apply~\cref{constr:degeneracy} to~$G$ with $H$~being a triangle (note that $c=3$) to obtain~$G'$.
  We introduce the edge-weights~$w'$ to~$G'$ by assigning for each edge~$e\in E$ its weight to all of its copies~$e'\in E(G')$.
  By~\cref{lem:lintimedegen}, $G'$ is constructed in linear time.
  By~\cref{lem:equivdegen} and the definition of~$w'$, $G'$ contains a negative weight triangle if and only if~$G$ does.
  Hence, this procedure is a strong linear-time diminisher with respect to degeneracy, as (by \cref{lem:lintimedegen}) either $d'\leq d/2$, or~$d'\leq 4c^2$, where in the latter case our strong diminisher decides whether~$G'$ contains a triangle of negative weight in~$O(n)$ time.
  
  For TC, the proof works analogously except that for each vertex~$v\in V$, we color its copies in~$G'$ with the color of~$v$.
\end{proof}

\section{(Turing) Kernelization Upper Bounds}
We complement our results on kernelization lower bounds by showing straight-forward strict kernel results for \textsc{$H$-Subgraph Isomorphism} for connected constant-size~$H$ to show the limits of any approach showing kernel lower bounds.

\paragraph*{Strict Turing Kernelization.}
For the parameters order of the largest connected component and maximum degree, we present strict $(a,b)$-Turing kernels:
\begin{definition}
 \label{def:TK}
 A \emph{strict $(a,b)$-Turing kernelization} for a parameterized problem~$L$ is an algorithm that decides every input instance~$(x,k)$ in time $O(a(|x|))$ given access to an oracle that decides whether $(x',k')\in L$ for every instance $(x',k')$ with $|x'|+k'\leq b(k)$ in constant time.
\end{definition}
Note that the diminisher framework in its current form cannot be applied to exclude (strict) $(a,b)$-Turing~kernelizations.
In fact, it is easy to see that \textsc{$H$-Subgraph Isomorphism} for connected constant-size~$H$ parameterized by the order~$\ell$ of the largest connected component admits an~$(n+m,\ell^2)$-Turing~kernel, as each oracle call is on a connected component (which is of size at most~$O(\ell^2)$) of the input graph.
We present a strict Turing~kernelization for \textsc{$H$-SI} for connected constant-size~$H$ parameterized by maximum degree~$\Delta$.

\begin{proposition}
  \label{prop:turker}
	\textsc{$H$-Subgraph Isomorphism} for connected $H$ with~$c=|V(H)|$ parameterized by maximum degree~$\Delta$ admits a strict $(n\cdot\Delta\cdot(\Delta-1)^{\floor{c/2}},\Delta\cdot(\Delta-1)^{\floor{c/2}})$-Turing~kernel.
\end{proposition}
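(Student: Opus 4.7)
The plan is to exploit the simple fact that any connected graph $H$ on $c$ vertices has radius at most $r := \floor{c/2}$: take any spanning tree $T$ of $H$; its diameter is at most $c-1$, its center has eccentricity $\lceil (c-1)/2 \rceil = \floor{c/2}$ in $T$, and distances in $H$ never exceed distances in $T$. Fix such a center $h^{*} \in V(H)$, so that $\mathrm{dist}_H(h^{*},h) \leq r$ for every $h \in V(H)$. This is the one structural ingredient the whole construction needs.

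Given the instance $(G,\Delta)$, the Turing kernelization then loops over each vertex $v \in V(G)$ as a candidate image of $h^{*}$. For each $v$ I would compute, by breadth-first search of depth $r$, the ball $B_r(v) := \{u \in V(G) : \mathrm{dist}_G(v,u) \leq r\}$ together with the induced subgraph $G[B_r(v)]$, and then query the oracle on the \hsiAcr instance $(G[B_r(v)], \Delta(G[B_r(v)]))$; the algorithm outputs \yes iff some query returns \yes. Correctness is immediate in both directions: any embedding $\varphi \colon V(H) \hookrightarrow V(G)$ satisfies $\mathrm{dist}_G(\varphi(h^{*}), \varphi(h)) \leq \mathrm{dist}_H(h^{*},h) \leq r$, so $\varphi(V(H)) \subseteq B_r(\varphi(h^{*}))$ and the query at $v = \varphi(h^{*})$ succeeds; conversely, any copy of $H$ found inside some $G[B_r(v)]$ is also a copy in $G$.

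For the claimed size and time bounds I would invoke the standard estimate that a depth-$r$ BFS in a graph of maximum degree $\Delta$ explores at most $1 + \Delta\sum_{i=0}^{r-1}(\Delta-1)^{i} = O((\Delta-1)^{r})$ vertices and $O(\Delta(\Delta-1)^{r})$ incident edges, so each BFS runs in $O(\Delta(\Delta-1)^{r})$ time and produces a queried instance of encoding size $|x'| = O(\Delta(\Delta-1)^{r})$; since $G[B_r(v)]$ is an induced subgraph of $G$, its maximum degree is at most $\Delta$, so $k' \leq \Delta$ and hence $|x'|+k' \in O(\Delta(\Delta-1)^{\floor{c/2}})$, matching the required $b$. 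Summing over the $n$ choices of $v$ yields the claimed overall running time $O(n \cdot \Delta(\Delta-1)^{\floor{c/2}})$.

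There is no real obstacle: the construction is the folklore ``root at a center, explore a BFS-ball'' argument. The only points needing a line of care are the radius bound on $H$ (handled by the spanning-tree observation above, and the fact that $c$, hence $r$, is a constant so all the geometric-series constants are harmless), and the observation that the \emph{edge} count in $B_r(v)$ still fits into the budget because each of the $O((\Delta-1)^{r})$ ball-vertices contributes only $\Delta$ incident edges. No vertex selection, coloring, or reduction-rule machinery is needed, in contrast to the diminisher constructions in Section~\ref{sec:lower-bounds}.
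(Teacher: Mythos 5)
Your proof is correct and follows essentially the same route as the paper's: fix a ``center'' vertex of $H$ at distance at most $\floor{c/2}$ from every other vertex of $H$, then for each $v\in V(G)$ query the oracle on the induced ball of radius $\floor{c/2}$ around $v$. The only difference is that you spell out the radius bound via the spanning-tree argument (the paper asserts it without proof) and are more explicit about the geometric-series size/time bookkeeping; both are fine.
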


\begin{proof}
    Let~$(G=(V,E))$ be an input instance of~\textsc{$H$-Subgraph Isomorphism} and let~$\Delta$ denote the maximum degree in~$G$.
    For each vertex~$v\in V$, we create the subgraph~$G_v$ that is the subgraph induced by the closed $\floor{c/2}$-neighborhood~$N_G^{\floor{c/2}}[v]$ of~$v$ (we refer to these as subinstances).
    It is not difficult to see that in each subinstance the graph is of size at most~$2\Delta\cdot(\Delta-1)^{\floor{c/2}}$ and each subinstance can be constructed in time linear in its size.
    The algorithm outputs \yes{} if and only if there is at least one subinstances containing~$H$.
    This results in a total running time of~$O(n\cdot\Delta\cdot(\Delta-1)^{\floor{c/2}})$.
    
    In the remainder, we prove that~$G$ contains~$H$ if and only if there exists a~$v\in V$ such that~$G_v$ contains~$H$.
    
    \noindent(\emph{if}) This direction is clear as~$G_v$ is an induced subgraph of~$G$ for every~$v\in V$.
    
    \noindent(\emph{only if})
    Recall that~$H$ is connected and~$c=|V(H)|$.
    Hence, there is a vertex~$u\in V(H)$ such that $\dist_H(u,w)\leq \floor{c/2}$ for every~$w\in V(H)$.
    Let~$v$ be the vertex in~$G$ that corresponds to~$u$ in~$H$.
    Then it is not difficult to see that~$G_v$ contains~$H$ as~$G_v$ is induced on all vertices in~$G$ that are of distance at most~$\floor{c/2}$ from~$v$.
\end{proof}

\paragraph{Running-time Related Strict Kernelization.}

For NP-hard problems, it is well-known that a decidable problem is fixed-parameter tractable if and only if it admits a kernel~\cite{DF13}.
In the proof of the \emph{only if}-statement, one derives a kernel of size only depending on the running time of a fixed-parameter algorithm solving the problem in question.
We adapt this idea to derive a strict kernel where the running time and size admit such running time dependencies.

\begin{theorem}
  \label{thm:pl-fpt-in-p-implies-kernel}
	Let $L$ be a parameterized problem admitting an algorithm solving each instance~$(x,k)$ in $k^c\cdot |x|$ time for some constant~$c>0$.
	Then for every~$\eps>0$, each instance~$(x,k)$ admits a strict~$(|x|^{1+c/(1+\eps)},k^{1+\eps})$-kernel.
\end{theorem}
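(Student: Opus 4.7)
The plan is to follow the classical FPT-vs.-kernelization equivalence (see~\cite{DF13}), but calibrated so that both the output size and the running time of the kernelization meet the stricter polynomial-in-$|x|$ budget demanded here. Fix $\eps > 0$ and set the threshold $T := k^{1+\eps}$. The kernelization algorithm performs a single size test on the input and branches accordingly.

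If $|x| \leq T$, then the input itself already satisfies $|x| + k \leq k^{1+\eps} + k = O(k^{1+\eps})$, so we output $(x,k)$ unchanged in time $O(|x|)$; this is well within the allowed budget $O(|x|^{1 + c/(1+\eps)})$ and trivially preserves the parameter, giving $k' = k$. Otherwise $|x| > T$, which means $k < |x|^{1/(1+\eps)}$ and hence $k^c < |x|^{c/(1+\eps)}$. In this branch we invoke the assumed $k^c\cdot |x|$-time algorithm, whose running time is then bounded by $|x|^{c/(1+\eps)} \cdot |x| = |x|^{1 + c/(1+\eps)}$, and we output a constant-size canonical \yes- or \no-instance according to its verdict. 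This clearly fits both the required size bound $O(k^{1+\eps})$ and the time bound.

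The only delicate point I anticipate is producing a trivial output instance while respecting the strict requirement $k' \leq k$. For any natural problem $L$ there exist fixed trivial \yes- and \no-instances whose parameter value may be taken to be, say, $0$, so we set $k' := 0 \leq k$ in the second case; the handful of tiny values of $k$ for which no such canonical instance happens to exist can be absorbed into the hidden constant of the size bound by enlarging it and returning the unmodified input. The exponent $1 + c/(1+\eps)$ is exactly what emerges from balancing $k^c \cdot |x|$ against $|x|^{1 + c/(1+\eps)}$ at the threshold $|x| = k^{1+\eps}$, which explains the shape of the bound, and no further machinery beyond this case distinction should be needed.
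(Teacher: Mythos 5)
Your proof is correct and follows essentially the same case distinction as the paper's: compare $|x|$ against the threshold $k^{1+\eps}$, return the input unchanged when it is already small, and otherwise solve the instance within the claimed time budget and emit a trivial constant-size equivalent instance. Your added remark about choosing $k' = 0$ (or falling back to the original input for the finitely many small $k$ where no canonical trivial instance with $k' \le k$ exists) is a sensible extra bit of care regarding the strictness condition that the paper leaves implicit.
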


\begin{proof}
  Let $\eps>0$ arbitrary but fixed.
  If $k^{1+\eps}\geq |x|$, then the size of the instance is bounded by $k^{1+\eps}+k$.
  Otherwise, if $k^{1+\eps}< |x| \iff k<|x|^{1/(1+\eps)}$,  we can compute a constant-size kernel (trivial \yes-/\no-instance) in $k^c\cdot |x|<|x|^{c/(1+\eps)}\cdot |x|=|x|^{1+c/(1+\eps)}$ time.
\end{proof}
NWT and TC are both solvable in~$O(k^2 \cdot n)$ time ($k$ being the order~$\ell$ of the largest connected component, the maximum degree~$\Delta$, or the degeneracy~$d$~\cite{ChibaN85}).
Together with~\cref{thm:pl-fpt-in-p-implies-kernel} gives several kernelization results for~NWT and~TC, for instance, with~$\eps = 2$:

\begin{corollary}\label{cor:nwt-kernel}
	NWT admits a strict~$(n^{5/3},d^3)$-kernel when parameterized by the degeneracy~$d$ of the input graph.
\end{corollary}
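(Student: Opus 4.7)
The proof is essentially a direct instantiation of~\cref{thm:pl-fpt-in-p-implies-kernel}, so my plan is to simply match up the hypotheses. First, I would invoke the algorithmic fact highlighted in the paragraph immediately preceding the corollary: NWT is solvable in $O(d^2\cdot n)$ time when $d$ is the degeneracy of the input graph, via the classical triangle-listing approach of Chiba and Nishizeki (adapted in the obvious way to compare the summed edge weights along each listed triangle). This places NWT parameterized by~$d$ exactly into the hypothesis of~\cref{thm:pl-fpt-in-p-implies-kernel} with constant $c=2$.

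Next, I would apply the theorem with the particular choice $\eps=2$. Substituting into the conclusion yields a strict $(|x|^{1+c/(1+\eps)},d^{1+\eps})$-kernel; with $c=2$ and $\eps=2$ the exponents evaluate to $1+2/3=5/3$ and $1+2=3$, respectively, delivering exactly the claimed strict $(n^{5/3},d^3)$-kernel.

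Since the statement follows by plugging in parameters, there is no genuine obstacle. The only point I would double-check is that the cited $O(d^2\cdot n)$ bound really holds for degeneracy (the Chiba--Nishizeki bound is traditionally phrased for arboricity~$a$, but arboricity and degeneracy are linearly related, so the bound transfers up to a constant factor that is absorbed into the $O(\cdot)$). Once that is verified, the corollary follows immediately from~\cref{thm:pl-fpt-in-p-implies-kernel}, and I would simply write a one-line proof saying ``Apply~\cref{thm:pl-fpt-in-p-implies-kernel} with $c=2$ and $\eps=2$.''
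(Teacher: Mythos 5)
Your proposal is correct and matches the paper's own proof exactly: the paper also invokes the $O(k^2\cdot n)$-time triangle-finding algorithm of Chiba and Nishizeki (with $k$ being degeneracy) and then applies \cref{thm:pl-fpt-in-p-implies-kernel} with $c=2$ and $\eps=2$. Your side remark about the arboricity/degeneracy translation is a fine sanity check and does not change the argument.
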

Note that the presented kernel is a strict~$(n^\alpha,d^\beta)$-kernel with $\alpha=5/3$ and~$\beta=3$.
As~$\alpha\cdot \beta=5$ in this case, there is a gap between the above kernel and the lower bound of~$\alpha\cdot \beta\geq3$ in~\cref{thm:mainklb}(iii).
Future work could be to close this gap.

\section{Conclusion}
\label{sec:concl}

We provided the first conceptual analysis of strict kernelization lower bounds for problems solvable in polynomial time.
To this end, we used and (slightly) enhanced the parameter diminisher framework~\cite{CFM11,FFHKMN16}.
Our results for \textsc{Negative Weight Triangle} and \textsc{Triangle Collection} rely on the APSP-conjecture and SETH, but these assumptions can be replaced with any running-time lower bound known for the problem at hand.
Indeed the framework is not difficult to apply and we believe that developing special techniques to design diminishers is a fruitful line of further research.

We point out that the framework excludes certain \emph{trade-offs} between kernel size and running time: the smaller the running time of the diminisher, the larger the size of the strict kernel that can be excluded.
However, the framework in its current form cannot be used to exclude the existence of \emph{any} strict kernel of polynomial size in even linear time.

In this work, we only considered parameters that we call \emph{dispersed} parameters, defined as follows. 
Let $G$ be an instance of a graph problem $L$, and let~$G_1, G_2, \ldots, G_p$ be its connected components, 
where $p\geq 1$. 
A parameter~$k$ of~$G$ is \emph{dispersed} if $k(G)$ (i.e.~the value of the parameter $k$ in the graph $G$) 
is \emph{equal} to $k(G_i)$ for \emph{at least one} connected subgraph $G_i$ of $G$. 
Otherwise, if $k(G)$ is \emph{larger} than $k(G_i)$ for \emph{every} connected subgraph $G_i$ of $G$, 
then we call~$k$ an \emph{aggregated} parameter. 
In our opinion, it is of independent interest to apply the (strong) diminisher framework to graph problems with aggregated parameters.
Note that such a classification into dispersed and aggregated parameters has not been studied previously. 

We close with one concrete challenge:
		Is there a (strong) diminisher for NWT or TC with respect to the (aggregated) parameter feedback vertex number?
		Note that the disjoint union operation that we use in all our diminishers in \cref{sec:lower-bounds} can increase this parameter. 

\paragraph{Acknowledgement.}

We thank Holger~Dell (Saarland University) for fruitful discussion on~\cref{sec:frameworks} and Rolf~Niedermeier for discussions leading to this work.

\bibliographystyle{plainnat}
\bibliography{klbfptinp-arxiv-2}%

\end{document}